\let\emph\undefined
\newcommand{\emph}[1]{\textsl{#1}}
\numberwithin{equation}{section}
\newtheoremstyle{style1}
  {13pt}
  {13pt}
  {}
  {}
  {\normalfont\bfseries}
  {.}
  {.5em}
  {}
\theoremstyle{style1}
\newtheorem{definition}[equation]{Definition}
\newtheorem{remark}[equation]{Remark}
\newtheoremstyle{style2}
  {13pt}
  {13pt}
  {\slshape}
  {}
  {\normalfont\bfseries}
  {.}
  {.5em}
  {}
\theoremstyle{style2}
\newtheorem{lemma}[equation]{Lemma}
\newtheorem{theorem}[equation]{Theorem}
\newtheorem{proposition}[equation]{Proposition}
\newcommand{\R}{\mathbb{R}}
\newcommand{\C}{\mathbb{C}}
\newcommand{\Z}{\mathbb{Z}}
\newcommand{\N}{\mathbb{N}}
\newcommand{\Ca}{\mathcal{C}}
\newcommand{\Ra}{\mathcal{R}}
\newcommand{\Aa}{\mathcal{A}}
\newcommand{\Ia}{\mathcal{I}}
\newcommand{\Ma}{\mathcal{M}}
\newcommand{\Pa}{\mathcal{P}}
\newcommand{\Oa}{\mathcal{O}}
\newcommand{\Hscr}{\mathscr{H}}
\newcommand{\Sscr}{\mathscr{S}}
\newcommand{\Wscr}{\mathscr{W}}
\newcommand{\rmB}{\mathrm{B}}
\newcommand{\rmt}{\mathrm{t}}
\newcommand{\dd}{\mathrm{d}}
\newcommand{\bfB}{\vec{B}}
\newcommand{\bfdd}{\boldsymbol{\mathrm{d}}}
\newcommand{\sft}{{\mathsf{t}}}
\newcommand{\sfG}{{\mathsf{G}}}
\newcommand{\sfU}{{\mathsf{U}}}
\newcommand{\sfC}{{\mathsf{C}}}
\newcommand{\sfA}{{\mathsf{A}}}
\newcommand{\sfM}{{\mathsf{M}}}
\newcommand{\hol}{\mathrm{hol}}
\newcommand{\End}{\mathsf{End}}
\newcommand{\Hom}{\mathsf{Hom}}
\newcommand{\id}{\text{id}}
\newcommand{\Pic}{\text{Pic}}
\newcommand{\HVBdl}{\mathrm{HVBdl}^\nabla}
\newcommand{\HLBdl}{\mathrm{HLBdl}}
\newcommand{\BGrb}{\mathrm{BGrb}}
\newcommand{\iu}{\mathrm{i}}
\newcommand{\frh}{\mathfrak{h}}
\newcommand{\frM}{{\mathfrak{M}}}
\newcommand{\triv}{\mathrm{triv}}
\newcommand{\Mat}{\mathrm{Mat}}
\newcommand{\One}{\mathds{1}}
\newcommand{\qandq}{\qquad \text{and} \qquad}
\newcommand{\rk}{\mathrm{rk}}
\newcommand{\Hilb}{{\mathsf{Hilb}_\C}}
\newcommand{\e}{\,\mathrm{e}\,}
\newcommand{\ii}{\,\mathrm{i}\,}
\newcommand{\mbf}{\boldsymbol}
\let\to\undefined
\newcommand{\to}{\longrightarrow}
\let\mapsto\undefined
\newcommand{\mapsto}{\longmapsto}
\newcommand{\opp}{\text{op}}
\newcommand{\<}{\langle}
\renewcommand{\>}{\rangle}
\newcommand{\rmpar}{\mathrm{par}}
\DeclareMathSymbol{\Phiit}{\mathalpha}{letters}{"08} 
\DeclareMathSymbol{\Psiit}{\mathalpha}{letters}{"09}
\DeclareMathSymbol{\Sigmait}{\mathalpha}{letters}{"06}
\DeclareMathSymbol{\Xiit}{\mathalpha}{letters}{"04}
\DeclareMathSymbol{\Piit}{\mathalpha}{letters}{"05}\let\Pi\undefined\newcommand{\Pi}{\Piit}
\DeclareMathSymbol{\Gammait}{\mathalpha}{letters}{"00}
\DeclareMathSymbol{\Omegait}{\mathalpha}{letters}{"0A}
\DeclareMathSymbol{\Upsilonit}{\mathalpha}{letters}{"07}
\DeclareMathSymbol{\Thetait}{\mathalpha}{letters}{"02}
\let\Phi\undefined\newcommand{\Phi}{\Phiit}
\let\Sigma\undefined\newcommand{\Sigma}{\Sigmait}
\let\Psi\undefined\newcommand{\Psi}{\Psiit}
\newenvironment{myitemize}{\begin{itemize}[itemsep=-0.05cm, leftmargin=0.55cm, topsep=0.1cm]}{\end{itemize}}
\title{Geometry and 2-Hilbert Space\\for Nonassociative Magnetic Translations}
\author{Severin Bunk, Lukas Müller and Richard J. Szabo}
\begin{document}

\begin{flushright}
\small
\textsf{Hamburger Beiträge zur Mathematik Nr.\,728}\\
{\sf ZMP--HH/18--9} \\
{\sf EMPG--18--09} \\
\end{flushright}

\vspace{10mm}

\begin{center}
	\textbf{\LARGE{Geometry and 2-Hilbert Space\\[0.2cm]for Nonassociative Magnetic Translations}}\\
	\vspace{1cm}
	{\large Severin Bunk$^{a}$}, \ \ {\large Lukas Müller$^{b}$} \ \ and \ \ {\large Richard J. Szabo$^{b}$}

\vspace{5mm}

{\em $^a$ Fachbereich Mathematik, Bereich Algebra und Zahlentheorie\\
Universit\"at Hamburg\\
Bundesstra\ss e 55, D\,--\,20146 Hamburg, Germany}\\
Email: {\tt  severin.bunk@uni-hamburg.de\ }
\\[7pt]
{\em $^b$ Department of Mathematics\\
Heriot-Watt University\\
Colin Maclaurin Building, Riccarton, Edinburgh EH14 4AS, U.K.}\\
and {\em Maxwell Institute for Mathematical Sciences, Edinburgh, U.K.}\\
and {\em The Higgs Centre for Theoretical Physics, Edinburgh, U.K.}\\
Email: {\tt lm78@hw.ac.uk and r.j.szabo@hw.ac.uk\ }
\end{center}

\vspace{1cm}

\begin{abstract}
\noindent
We suggest a geometric approach to quantisation of the twisted Poisson
structure underlying the dynamics of charged particles in fields of
generic smooth distributions of magnetic charge, and dually of closed strings
in locally non-geometric flux backgrounds, which naturally allows for representations of nonassociative magnetic translation operators.
We show how one can use the 2-Hilbert space of sections of a bundle
gerbe in a putative framework for canonical quantisation.
We define a parallel transport on bundle gerbes on $\R^d$ and show that it naturally furnishes weak projective 2-representations of the translation group on this 2-Hilbert space.
We obtain a notion of covariant derivative on a bundle gerbe and a novel perspective on the fake curvature condition.
\end{abstract}

\newpage

\tableofcontents

\bigskip

\section{Introduction and summary}

In this paper we consider the quantisation of a \emph{twisted magnetic
  Poisson structure} which is defined in the following way. We work
with the $d$-dimensional real vector space $M=\R^d$ for some $d\in\N$, which we call
`configuration space', and consider its dual `momentum space' $M^*$
with the evaluation pairing denoted by $\langle-,-\rangle:M^*\times M\to\R$. The `phase space' $\frM=T^*M=M\times M^*$ is naturally a symplectic space with the canonical symplectic form $\sigma_0(X,Y):=\langle p,y\rangle - \langle q,x\rangle$, for any two vectors $X=(x,p)$ and $Y=(y,q)$ of $\frM$; we write $x=\sum_{i=1}^d\, x^i\,e_i$ and $p=\sum_{i=1}^d\, p_i\, e^i$ where $e_i$ are the standard basis vectors of $\R^d$ and $e^i$ are their duals, $\langle e^i,e_j\rangle=\delta^i{}_j$. By a `magnetic field' we shall generically mean any 2-form $\rho\in\Omega^2(M)$ on $M$ whose components $\rho_{ij}(x)$ have suitable smoothness properties, and with it we can deform the canonical symplectic structure to an almost symplectic form
\begin{equation}\label{eq:sigmarho}
\sigma_\rho = \sigma_0 - \pi^*\rho \ ,
\end{equation}
where $\pi \colon \frM \longrightarrow M$ denotes the projection onto the base space.
The inverse $\vartheta_\rho=\sigma_\rho^{-1}\in\Gamma(\frM,\bigwedge^2T\frM)$ defines a bivector
\begin{equation}\label{eq:varthetarho}
\vartheta_\rho = \bigg(\begin{matrix}
0 & \One_d \\ -\One_d & -\rho
\end{matrix}\bigg)
\end{equation}
and `twisted magnetic Poisson brackets'
\begin{equation}\label{eq:twistedbrackets}
\{f,g\}_\rho:= \vartheta_\rho(\dd f\wedge\dd g)
\end{equation}
for smooth functions $f,g\in C^\infty(\frM,\C)$. For the coordinate functions $x^i(x,p)=x^i$ and $p_i(x,p)=p_i$ on $\frM$, we have the relations
\begin{equation}\label{Eq: Magnetic commutation relations}
\{x^i,x^j\}_\rho=0 \ , \qquad \{x^i, p_j\}_\rho = \delta^i{}_j \qquad \mbox{and} \qquad \{p_i,p_j\}_\rho = -\rho_{ij}(x) \ .
\end{equation}
Computation of the Schouten bracket in this case,
\begin{equation}\label{eq:Schoutenbracket}
[\vartheta_\rho,\vartheta_\rho]_{\rm S} =
\mbox{$\bigwedge^3$}\vartheta_\rho^\sharp(\dd\sigma_\rho) \ ,
\end{equation}
reveals that $\vartheta_\rho$ defines an $H$-twisted Poisson structure on $\frM$ with the 3-form
\begin{equation}
H := \dd\rho
\end{equation}
on configuration space, called the `magnetic charge'; it determines the Jacobiators
\begin{equation}\label{eq:Jacobiators}
\{f,g,h\}_\rho:=[\vartheta_\rho,\vartheta_\rho]_{\rm S}(\dd f\wedge\dd g\wedge\dd h)
\end{equation}
of the twisted magnetic Poisson brackets \eqref{eq:twistedbrackets}. On coordinate functions the violation of the Jacobi identity is seen through the
possibly non-vanishing Jacobiators
\begin{equation}\label{eq:can associator with B-field}
\{p_i,p_j,p_k\}_\rho = -H_{ijk}(x) \ .
\end{equation}

The twisted magnetic Poisson structure is central to certain
applications to physics. For $d=3$ it governs the motion of a charged
particle in a magnetic field $\bfB= \sum_{i=1}^3\, B^i(x) \, e_i$ on $M=\R^3$ by taking
\begin{equation}
\rho_{ij}(x) = \sum_{k=1}^3\, e\, \varepsilon_{ijk}\, B^k(x) \ ,
\end{equation}
where $e\in\R$ is the electric charge and $\varepsilon$ is the
Levi-Civita symbol. Canonical quantisation of the twisted magnetic Poisson
structure means applying the correspondence principle of quantum
mechanics to linearly associate operators $\Oa_f$ to phase space functions
$f(X)$ such that the brackets \eqref{Eq: Magnetic commutation
  relations} of coordinate functions map to the commutation relations
\begin{equation}
\label{eq:can comm with B-field}
		[\Oa_{x^i},\Oa_{x^j} ] = 0\ , \qquad
		[\Oa_{x^i},\Oa_{p_j} ] = \iu\, \hbar\, \delta^i{}_j \,
                \One \qquad \mbox{and} \qquad 
		[\Oa_{p_i},\Oa_{ p_j} ]= - \iu\,\hbar\, \rho_{ij}(\Oa_x) \ ,
\end{equation}
with a deformation parameter $\hbar\in\R$.
In physics one thus says that a magnetic field $\bfB$ leads to a noncommutative momentum space.
In particular, by the second relation in \eqref{eq:can comm with
  B-field} the operators
\begin{equation}
\Pa_v = \exp\Big(\frac\iu\hbar \, \Oa_{\langle p,v\rangle} \Big)
\end{equation}
implement translations by vectors $v \in \R^3_\rmt$
in the translation group $\R^3_\rmt$ of $M$,
\begin{equation}
\Pa_v^{-1}\, \Oa_{x^i}\, \Pa_v
= \Oa_{x^i+v^i} \ ,
\end{equation}
and by the third relation they do not commute; we refer to $\Pa_v$ as
`magnetic translations'. Note that the map $f\mapsto\Oa_f$ does not
generally send twisted Poisson brackets to commutators, since for
functions $f,g\in C^\infty(\frM,\C)$ one has
\begin{equation}
[\Oa_f,\Oa_g]=\iu\, \hbar\, \Oa_{\{f,g\}_\rho} + O(\hbar^2) \ ,
\end{equation}
where the order $\hbar^2$ corrections are non-zero only when $f$ and $g$ are at most quadratic in $X=(x,p)$.

In the classical Maxwell theory of electromagnetism, the magnetic
field $\bfB$ is free of sources, i.e.~${\rm div}(\bfB)=
\sum_{i=1}^3\,\nabla_{e_i}B^i = 0$, and hence $\rho$ is a closed
2-form, so that the bivector $\vartheta_\rho$ defines a Poisson structure in this instance (equivalently $\sigma_\rho$ is a symplectic form). In this case the magnetic translation operators can be represented geometrically as parallel transport on a hermitean line bundle with connection $(L,\nabla^L)$ on $M$ (see e.g.~\cite{Hannabuss:2017ion,Soloviev:Dirac_Monopole_and_Kontsevich}), where the curvature 2-form of $\nabla^L$ is given by $F_{\nabla^L} = \rho$.
Then the quantum Hilbert space of the charged particle is the space $\Hscr =
\mathrm{L}^2(M,L)$ of square-integrable global sections of $L$ (with
respect to the Lebesgue measure). This geometric approach is reviewed
in Section~\ref{Sec: Case without sources}.

On the other hand, Dirac's semi-classical modification of the Maxwell
theory allows for distributions of magnetic sources.
For the field of a single Dirac monopole of magnetic charge $g\in\R$
located at the origin,
\begin{align}
	B^i(x)= g \, \frac{x^i}{|x|^3} \ ,
\end{align}
which is defined on the configuration space $M=\R^3 \setminus \{0\}$,
the line bundle $L\to M$ is non-trivial and only exists if the Dirac
charge quantisation condition $\frac{2\,e\,g}\hbar\in\Z$ is satisfied~\cite{Wu-Yang}.

When considering generic magnetic fields with sources of magnetic charge, the momentum operators $\Oa_{{p}_1}, \Oa_{{p}_2}, \Oa_{{p}_3}$ fail to associate and from the Jacobiator \eqref{eq:can associator with B-field} we find
\begin{equation}\label{eq:3-can associator with B-field}
	[\Oa_{{p}_1}, \Oa_{{p}_2}, \Oa_{{p}_3}] = \hbar^2\, e\, {\rm
          div}(\bfB)(\Oa_x) \ .
\end{equation}
That is, the failure of associativity of the operators $\Oa_{{p}_i}$ is proportional to the magnetic charge density, whence for generic magnetic fields the operators $\Oa_{{p}_i}$ are part of a nonassociative algebra.
The corresponding magnetic translations $\Pa_u, \Pa_v, \Pa_w$ for vectors $u,v,w \in \R^3_\rmt$ no longer associate either, with the failure of associativity controlled by a 3-cocycle on the translation group $\R^3_\rmt$ that takes values in the $\sfU(1)$-valued functions on $M$.
Consequently, in fields of generic smooth distributions of magnetic
charge $H=e\, {\rm div}(\bfB)\, \dd x^1\wedge\dd x^2\wedge\dd x^3$, the algebra of observables of a charged particle can no longer be represented on a Hilbert space.
The effect on the geometric side is the breakdown of the description in terms of a line bundle on $M$. This was already observed long ago by~\cite{Jackiw:1984rd}.
In the case of a collection of isolated Dirac monopoles on $\R^3$ one can circumvent these issues by excluding the locations of the monopoles from $\R^3$~\cite{Wu-Yang}, but this is not feasible when one wishes to describe smooth distributions of magnetic charge.

In addition to this conceptual interest in quantum mechanics, for general dimension $d$ the twisted magnetic Poisson structure also plays a role in certain non-geometric string theory compactifications, see e.g.~\cite{Blumenhagen:2010hj,Lust:2010iy,Blumenhagen:2011ph,MSS:NonGeo_Fluxes_and_Hopf_twist_Def}, after applying a `magnetic duality transformation'. For this, note first that the transformation $(x,p)\mapsto(p,-x)$ of order~4 preserves the canonical symplectic form $\sigma_0$, and if we further trade the 2-form $\rho$ on $M$ for a 2-form $\beta$ on $M^*$ then we obtain the bivector
\begin{equation}
\vartheta^\vee_\beta = \bigg(\begin{matrix}
-\beta & \One_d \\ -\One_d & 0
\end{matrix}\bigg) \ ,
\end{equation}
which leads to the twisted Poisson brackets of coordinate functions
\begin{equation}
\{x^i,x^j\}_\beta = -\beta^{ij}(p) \ , \qquad \{x^i,p_j\}_\beta = \delta^i{}_j \qquad \mbox{and} \qquad \{p_i,p_j\}_\beta = 0 \ .
\end{equation}
Now the twisting is by a 3-form
\begin{equation}
R := \dd\beta
\end{equation}
on momentum space, called an `$R$-flux', and in particular the non-vanishing Jacobiators on coordinate functions are given by
\begin{equation}
\{x^i,x^j,x^k\}_\beta = -R^{ijk}(p) \ .
\end{equation}
In this case one speaks of closed strings propagating in a
noncommutative and nonassociative configuration space upon
quantisation, which is interpreted as saying that the $R$-flux
background is `locally non-geometric'. Our results in this paper 
shed light on what should substitute for canonical quantisation of
locally non-geometric closed strings.

Even though the Hilbert space framework is unavailable, the observables still form a well-defined algebra in the case of generic magnetic field $\rho$, as originally studied in~\cite{Gunaydin:1985ur} and more recently within an algebraic approach to nonassociative quantum mechanics in~\cite{Bojowald:2014oea}. Thus far there exist two approaches to the full quantisation of twisted magnetic Poisson structures. The original approach of~\cite{MSS:NonGeo_Fluxes_and_Hopf_twist_Def} is based on deformation quantisation and it provides explicit nonassociative star products, which have been developed from other perspectives and applied to nonassociative quantum mechanics in~\cite{Bakas:2013jwa,Mylonas:2013jha,Barnes:2014ksa,KS:G_2_and_quantisation}. Another approach is to embed the $2d$-dimensional twisted Poisson manifold $(\frM,\vartheta_\rho)$ into a $4d$-dimensional symplectic manifold by extending the technique of symplectic realisation from Poisson geometry~\cite{KS:Symplectic_realisation}; in this approach geometric quantisation can be used and the standard operator-state methods of canonical quantum mechanics employed, but at the cost of trading the nonassociativity for the introduction of spurious auxiliary degrees of freedom which cannot be eliminated.

In this paper we provide a third perspective on nonassociativity that is more along the lines of an operator-state framework in canonical quantisation, and which avoids the introduction of extra variables.
It was suggested by~\cite{Szabo:Magnetic_monopoles_and_NAG} that a suitable geometric framework to handle nonassociativity analogously to the source-free case $H=0$ would be to replace line bundles by bundle gerbes with connections.
Bundle gerbes are a categorified analogue of line bundles whose curvatures are 3-forms rather than 2-forms, and these curvature 3-forms can model the magnetic charges $H=\dd\rho$.
We show that for generic smooth distributions of magnetic charge on $M=\R^d$ the nonassociative translation operators $\Pa_v$ are realised naturally as the parallel transport functors of a suitably chosen bundle gerbe $\Ia_\rho$ on $M$.
This suggests a canonical geometric representation of the
nonassociative algebra of observables, where the state space is the 2-Hilbert space $\Gamma(M, \Ia_\rho)$ of global sections of $\Ia_\rho$; this is in complete analogy with the line bundle description in the case of vanishing magnetic charge. In particular, this gives a precise meaning to the formal manipulations of~\cite{Jackiw:1984rd} which subsumed quantities on which the nonassociative magnetic translations were represented for generic magnetic charge density $H$: in our case these quantities are realised as sections of a bundle gerbe $\Ia_\rho$ on~$\R^3$.

In Section~\ref{Sec: Case without sources} we start by reviewing the associative  situation with $H=0$, and in particular the realisation of magnetic translations as the parallel transport on a hermitean line bundle with connection. We also recall the magnetic Weyl correspondence which provides the link between geometric quantisation and deformation quantisation via the parallel transport.
In Section~\ref{Sec: Bundle gerbes} we introduce the geometric formalism of bundle gerbes on $\R^d$.
Like line bundles on $\R^d$, bundle gerbes on $\R^d$ can be described very explicitly up to equivalence, and we specialise to that simplified setting.
This allows us to give a very concrete model for the 2-Hilbert space of sections $\Gamma(\R^d, \Ia_\rho)$ of bundle gerbes $\Ia_\rho$ on $\R^d$.
We then provide a definition of projective representations of groups on categories rather than on vector spaces in Section~\ref{sect:weak projective 2-reps}: their failure to be honest representations is measured by group 2-cocycles whose target is a category rather than a module.

The parallel transport $\Pa$ on a given bundle gerbe $\Ia_\rho$ on $M=\R^d$ is defined in Section~\ref{sect:weak projective rep of translations}.
We show that $\Pa$ induces what we call a `weak projective 2-representation' of the translation group $\R^d_\rmt$ of $M$ on the 2-Hilbert space of global sections $\Gamma(M, \Ia_\rho)$; the higher 2-cocycle that twists this 2-representation is exactly the 3-cocycle discussed originally in~\cite{Jackiw:1984rd} and derived precisely by~\cite{MSS:NonGeo_Fluxes_and_Hopf_twist_Def,Bakas:2013jwa,Mylonas:2013jha,Szabo:Magnetic_monopoles_and_NAG} in the framework of deformation quantisation. What is currently lacking is a categorical version of the magnetic Weyl correspondence, defined in terms of the parallel transport functors $\Pa$, that bridges the approaches to quantisation of the twisted magnetic Poisson structure through the 2-Hilbert space $\Gamma(M, \Ia_\rho)$ and deformation quantisation.
In the final Section~\ref{sect:Covariant derivatives of sections} we take a step towards an infinitesimal version of the magnetic translations, which can be interpreted as momentum operators on $\Gamma(M, \Ia_\rho)$ and could lead to a higher version of the magnetic Weyl correspondence (and ultimately Hamiltonian dynamics) in this setting.
We are able to give a notion of a tangent category to $\Gamma(M, \Ia_\rho)$ as well as a covariant derivative of sections of $\Ia_\rho$.
We show that the covariant derivative ties in with the parallel transport functors and that it very naturally gives rise to the fake curvature condition from higher gauge theory.

\subsection*{Conventions and notation}
\label{sect:conventions and notation}

For the convenience of the reader, we summarise here our notation and conventions to be used throughout this paper.

\begin{myitemize}
	\item Most of the geometry in this paper will take place on the smooth manifold $M=\R^d$.
	The manifold $\R^d$ carries a simply transitive action of the translation group in $d$ dimensions, which we denote by $\R^d_\rmt$ in order to distinguish it from the smooth manifold $\R^d$.
	The global vector field canonically associated to a translation vector $v \in \R^d_\rmt$ will be denoted by $\hat{v} \in \Gamma(\R^d, T\R^d)$.
	That is, $\hat{v}|_{x} = v$ for every $x \in \R^d$.
	Often we will tacitly use the canonical identification of the tangent space $T_x\R^d$ with~$\R^d$.
	
	\item Let $v_1, \ldots, v_m \in \R^d_\rmt$ be translation
          vectors for some integer $m \geq 1$.
	We denote by $\triangle^m(x;v_1,\dots ,v_m)$ the $m$-simplex spanned by $ x - \sum_{i=1}^m\, v_i,  x- \sum_{i=2}^{m}\,v_i, \dots , x-v_m, x$ (see Figure \ref{Fig: Sketch D3} for the case $m=3$).
	Concretely we set
	\begin{align}
		\triangle^m(x;v_1,\dots ,v_m) &= \bigg\{ x - \sum_{i=1}^m\, v_i + \sum_{i=1}^m\, t_i \ \sum_{j=1}^i \, v_j \in \R^d \ \bigg| \ t_i \in [0,1]\,, \ \sum_{i=1}^m\, t_i \leq 1 \bigg\}
		\\[4pt]
		&= \bigg\{ x - \sum_{i=1}^m\, \bigg( 1 - \sum_{j=i}^m\, t_j \bigg)\, v_i \in \R^d \ \bigg| \ t_i \in [0,1]\,, \ \sum_{i=1}^m\, t_i \leq 1 \bigg\} \ .
	\end{align}
Let $\triangle^m = \triangle^m(\sum_{i=1}^m\, e_i; e_1, \ldots, e_m) \subset \R^m$ denote the standard geometric $m$-simplex, where $e_1, \ldots, e_m$ denotes the standard basis of $\R^m$.
	There is a canonical smooth map
	\begin{equation}
		\delta^m(x;v_1, \ldots, v_m) \colon \triangle^m \to \R^d\ ,
		\quad
		(t_1, \ldots, t_m) \mapsto x - \sum_{i=1}^m\, \bigg( 1 - \sum_{j=i}^m\, t_j \bigg)\, v_i\ .
	\end{equation}
	For an $m$-form $\eta \in \Omega^m(\R^d)$ we introduce the shorthand notation
	\begin{equation}
		\int_{\triangle^m(x;v_1,\dots ,v_m)}\, \eta \coloneqq \int_{\triangle^m}\, \big( \delta^m(x;v_1, \ldots, v_m) \big)^* \eta
	\end{equation}
	for the integral of $\eta$ over the $m$-simplex in $\R^d$ based at $x - \sum_{i = 1}^m\, v_i$ and spanned by the vectors $v_1, \ldots, v_m$.
	
For $m=3$, the boundary of $\triangle^3(x;e_1,e_2,e_3)$ with the induced orientation decomposes as
	\begin{equation}
	\label{Eq: Boundary delta3}
	\begin{split}
		\partial \triangle^3(x;e_1,e_2,e_3) =\
		&\triangle^2(x;e_2,e_3)\ \cup\ \triangle^2(x;e_1,e_2+e_3)
		\\
		&\ \cup\ \overline{\triangle^2(x;e_1+e_2,e_3)}\ \cup\ \overline{\triangle^2(x-e_3;e_1,e_2)}\ ,
	\end{split}
	\end{equation}
	where an overline denotes orientation reversal.
	
\begin{figure}[tb]
\centering
\begin{overpic}[scale=0.5]%
{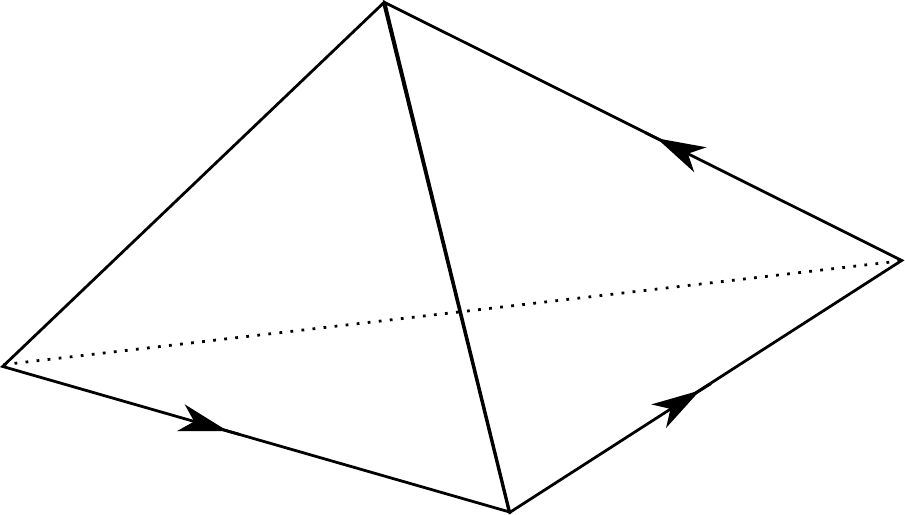}
\put(-48,14){\scriptsize$x-w-v-u$}
\put(60,-10){\scriptsize$x-v-u$}
\put(132,35){\scriptsize$x-u$}
\put(52,75){\scriptsize$x$}
\end{overpic}
\bigskip
\caption{\small The 3-simplex $\triangle^3(x;w,v,u)$.}
\label{Fig: Sketch D3}
\end{figure}
	
	\item Let $\sfG$ be a group.
	We denote by $\rmB \sfG$ the category with a single object $*$ and one morphism for every element $g \in \sfG$.
	Composition of these morphisms is given by the group multiplication in $\sfG$.
	
	\item If $\Ca$ is a category and $k \in \N$ we define $\underline{\Ca}_{\,k}$ to be the (strict) $k$-category obtained by adding only trivial morphisms in degrees $1 < l \leq k$ to $\Ca$.
	
	\item If $\Ca$ is a symmetric monoidal category and $k \in \N$, we let $\rmB^k \Ca$ denote the $(k+1)$-category obtained by placing $\Ca$ in degrees $k$ and $k+1$ while having a single object in every other degree.
	
	\item For a category $\Ca$, we let $\mathrm{Aut}(\Ca)$ denote the Picard groupoid that has auto-equivalences of $\Ca$ as objects and natural isomorphisms as morphisms.
	If $\Ca$ is monoidal, we let $\mathrm{Aut}_\otimes(\Ca)$ denote the Picard groupoid that consists of monoidal auto-equivalences and monoidal natural isomorphisms.
	
	\item We let $\mathrm{BiCat}$ denote the 3-category of bicategories, 2-functors, 2-natural transformations and modifications.
	
	\item Given an object $r$ of a symmetric monoidal category $\mathcal{R}$ and a module category $\Ma$ over $\Ra$, we denote by $\ell_r \colon \Ma \longrightarrow \Ma$ the action of $r$ on $\Ma$.
	
	\item Throughout this paper we do not indicate coherence
          morphisms explicitly, such as associators and unitors in
          monoidal categories (or braidings in the symmetric case), in
          order to streamline notation.
	Due to Mac~Lane's Coherence Theorem~\cite{ML:Categories} they can always be reinstated in an essentially unique way.
\end{myitemize}

\section{Associative magnetic translations}
\label{Sec: Case without sources}

If the magnetic field $\rho\in\Omega^2(M)$ is defined everywhere on
$M=\R^d$ and is closed, $\dd \rho = 0$, then there exists a connection on the trivial complex line bundle \smash{$L = M \times \C \xrightarrow{\text{pr}_{M}} M$} with curvature $\rho$; the connection can be described by a globally defined 1-form $A= \sum _{i=1}^dA_i\dd x^i\in \Omega^1(M)$ satisfying $\dd A = \rho $.
Quantising the Poisson structure $\vartheta_\rho$ produces the Hilbert space $\Hscr=\mathrm{L}^2(M, L)$ of square-integrable sections of $L$ (defined with respect to the Lebesgue measure). The coordinate functions $x^i$ and $p_i$ on $\frM$ correspond to self-adjoint operators 
\begin{equation}
	O_{x^i} \colon {\rm dom}(O_{x^i}) \longrightarrow \Hscr\ , \quad 
	(O_{x^i}\psi)(x) = x^i\, \psi(x)
\end{equation}
and 
\begin{equation}
	O_{{p}_i} \colon {\rm dom}(O_{{p}_i}) \longrightarrow \Hscr\ , \quad
	(O_{p_i}\psi)(x) = \left(-\iu\,\hbar\, \frac{\partial}{\partial x^i} + A_i \right)\psi(x) \ ,
\end{equation}
for all $x\in M$, each defined on a dense subspace of $\Hscr$.
One easily verifies the commutation relations \eqref{eq:can comm with B-field} which quantise the magnetic Poisson brackets \eqref{Eq: Magnetic commutation relations}. 

The momentum operator $O_{p_i}$ is nothing but $-\iu\,\hbar$ times the covariant derivative in the direction $e_i$ induced by the connection $\nabla^L = \dd + \frac\iu\hbar\, A$ on $L$; therefore the canonical action of the translation group $\R^d_\rmt$ on $\Hscr$ generated by the momentum operators is given by the parallel transport in the line bundle $L$.
Concretely, for $v \in \R^d_\rmt$ we define the magnetic translation operators
\begin{equation}
	P_v \colon \Hscr \longrightarrow \Hscr\ ,
 	 \quad (P_v \psi)(x) = P^{\nabla^L}_{\triangle^1(x;v)}\, \psi(x-v) \ ,
\end{equation} 
for all $x\in M$.
Here the parallel transport in the line bundle $L$ with connection $\nabla^L$ along the path $t \mapsto x - (1-t)\,v$ for $t \in [0,1]$ is given by
\begin{equation}
	P^{\nabla^L}_{\triangle^1(x;v)} = \exp\bigg(-\frac\iu\hbar\, \int_{\triangle^1(x;v)}\, A\bigg)\ .
\end{equation}
Evaluating the composition of two translation operators on a section $\psi \in \Gamma(M, L)$ using Stokes' Theorem gives 
\begin{equation}\label{Eq: quasi-projective representation}
\begin{aligned}
	\big( P_v P_w \psi \big) (x) & = P^{\nabla^L}_{\triangle^1(x;v)}\, P^{\nabla^L}_{\triangle^1(x-v,w)}\, \big( \psi(x-v-w) \big)
	\\[0.2cm]
	&= \exp \bigg( -\frac\iu\hbar\, \int_{\partial \triangle^2(x;w,v)}\, A \ \bigg)\, (P_{v+w} \psi) (x) 
	\\[0.2cm]
	& = \exp \bigg( -\frac\iu\hbar\, \int_{ \triangle^2(x;w,v)}\, \rho \bigg) \, (P_{v+w} \psi) (x) \ .
\end{aligned}
\end{equation}
This shows that the magnetic translation operators do not respect the group structure of the translation group strictly, but only up to a failure measured by the  $\sfU(1)$-valued function $\omega_{v,w} \in  C^\infty(M, \sfU(1))$ with
\begin{align}
\label{EQ: Definition 2 cocycle}
	\omega_{v,w}(x) \coloneqq \exp \bigg( -\frac\iu\hbar\, \int_{ \triangle^2(x;w,v)}\, \rho \bigg) \ ,
\end{align}
for all $x \in M$ and $v,w \in \R^d_\rmt$.

We will now study the collection of these functions and in particular the dependence of $\omega_{v,w}$ on the vectors $v, w \in \R^d_\rmt$ in more detail.  

\begin{definition}
Let $\sfG$ be a group and let $\sfM$ be a $\sfG$-module, i.e. an abelian group $\sfM$ with a compatible $\sfG$-action $\tau\colon \sfG \times \sfM \to \sfM$, $(g,\psi)\mapsto \tau_g(\psi)$.
A \emph{2-cocycle $\omega$ on $\sfG$ with values in $\sfM$} is a function $\omega \colon \sfG \times \sfG \to \sfM$, $(g,h)\mapsto\omega_{g,h}$ satisfying
\begin{align}
	\tau_g(\omega_{h,k}) \, \omega^{-1}_{g\,h,k} \, \omega_{g,h\,k} \, \omega^{-1}_{g,h} = 1 \ ,
\end{align}  
for all $g,h,k \in \sfG$, and we write $\omega\in\mathsf{C}^2(\sfG,\sfM)$.
\end{definition}

\begin{proposition}
The collection of functions $\{\omega_{v,w}\}_{v,w \in \R^d_\rmt}$
defined in \eqref{EQ: Definition 2 cocycle} defines a
2-cocycle
\begin{equation}
	\omega \in \mathsf{C}^2 \big( \R^d_\rmt,C^\infty(M, \sfU(1)) \big)
\end{equation}
on the translation group $\R^d_\rmt$ with respect to the action 
\begin{align*}
	\tau \colon \R^d_\rmt \times C^\infty  \big( M, \sfU(1) \big) &\longrightarrow C^\infty \big( M, \sfU(1)  \big)\ , \quad
	(v, f)  \longmapsto \tau_{-v}^*f
\end{align*}
of $\R^d_\rmt$ on the abelian group $C^\infty(M, \sfU(1))$, where $\tau_v:M\to M$ is the translation $x\mapsto x+v$ by $v\in\R^d_\rmt$ for all $x\in M$.
\end{proposition}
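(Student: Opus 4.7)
The plan is to verify the 2-cocycle identity directly by interpreting the four factors as integrals of $\rho$ over the four faces of a 3-simplex, and then applying Stokes' theorem together with $\dd\rho = 0$.

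First I would unpack the cocycle condition at a point $x \in M$. With the $\R^d_\rmt$-action $(\tau_u f)(x) = f(\tau_{-u}(x)) = f(x-u)$, the condition
\begin{equation}
\tau_u(\omega_{v,w}) \cdot \omega^{-1}_{u+v,\,w} \cdot \omega_{u,\, v+w} \cdot \omega^{-1}_{u,v} = 1
\end{equation}
at $x$ becomes
\begin{equation}
\omega_{v,w}(x-u) \cdot \omega_{u+v,\,w}(x)^{-1} \cdot \omega_{u,\, v+w}(x) \cdot \omega_{u,v}(x)^{-1} = 1\ .
\end{equation}
Substituting the definition \eqref{EQ: Definition 2 cocycle}, this is equivalent (modulo $2\pi\hbar\,\Z$) to the vanishing of
\begin{equation}
\mathcal{I}(x;u,v,w) \coloneqq \int_{\triangle^2(x-u;\,w,v)}\!\rho \ -\ \int_{\triangle^2(x;\,w,\,u+v)}\!\rho\ +\ \int_{\triangle^2(x;\,v+w,\,u)}\!\rho\ -\ \int_{\triangle^2(x;\,v,u)}\!\rho\ .
\end{equation}

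The key geometric observation is that these four 2-simplices are precisely the faces of the 3-simplex $\triangle^3(x;w,v,u)$: making the substitution $(e_1,e_2,e_3) \mapsto (w,v,u)$ in the boundary decomposition \eqref{Eq: Boundary delta3} yields
\begin{equation}
\partial \triangle^3(x;w,v,u) = \triangle^2(x;v,u)\ \cup\ \triangle^2(x;w,v+u)\ \cup\ \overline{\triangle^2(x;w+v,u)}\ \cup\ \overline{\triangle^2(x-u;w,v)}\ ,
\end{equation}
so that (taking into account the orientation reversals)
\begin{equation}
\int_{\partial \triangle^3(x;w,v,u)}\!\rho = -\,\mathcal{I}(x;u,v,w)\ .
\end{equation}

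By Stokes' theorem, this boundary integral equals $\int_{\triangle^3(x;w,v,u)} \dd\rho$. Since we are in the associative setting of Section~\ref{Sec: Case without sources}, namely $\dd\rho = 0$, we conclude $\mathcal{I}(x;u,v,w) = 0$ identically in $x$, $u$, $v$, $w$, which is stronger than the required congruence mod $2\pi\hbar\,\Z$ and proves the cocycle identity. Smoothness of $\omega_{v,w}$ in $x$ is clear from smooth dependence of the integrand on the basepoint of the simplex, so $\omega_{v,w} \in C^\infty(M,\sfU(1))$.

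The only real bookkeeping is matching signs and orientations of the four 2-simplices to the oriented faces listed in \eqref{Eq: Boundary delta3}; after that, Stokes' theorem and closedness of $\rho$ give the result immediately. This also clarifies why the analogous computation in the non-closed case $\dd\rho = H \neq 0$ will produce the higher cocycle $\exp(-\tfrac{\iu}{\hbar}\int_{\triangle^3(x;w,v,u)} H)$ governing the nonassociative magnetic translations later in the paper.
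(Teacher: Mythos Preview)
Your proposal is correct and follows essentially the same route as the paper: both identify the four factors in the cocycle condition with integrals of $\rho$ over the oriented faces of a 3-simplex, then apply Stokes' theorem and $\dd\rho=0$. Your version is simply more explicit about matching the faces via the decomposition~\eqref{Eq: Boundary delta3} (and uses $\triangle^3(x;w,v,u)$ rather than the paper's $\triangle^3(x;u,v,w)$, an immaterial orientation choice).
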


\begin{proof}
We check the 2-cocycle condition at an arbitrary but fixed point $x \in M$:
\begin{align}
	\omega_{v,w}(x-u)\, \omega_{u+v,w}^{-1}(x)\, \omega_{u,v+w}(x)\, \omega_{v,w}^{-1}(x)
	&= \exp \bigg(  -\frac\iu\hbar\, \int_{\partial \triangle^3(x;u,v,w)}\, \rho \bigg)
	\\[4pt]
	&=\exp \bigg(  -\frac\iu\hbar\, \int_{\triangle^3(x;u,v,w)}\, \dd \rho \bigg)
	\\[4pt]
	&= 1
\end{align}
for all triples of translation vectors $u,v,w \in \R^d_\rmt$.
\end{proof}

Consequently, the parallel transport operators $P^{\nabla^L}$ do not furnish a representation of the translation group $\R^d_\rmt$ on the Hilbert space $\Hscr = \mathrm{L}^2(M,L)$.
Rather, we find the following structure (see also~\cite{Soloviev:Dirac_Monopole_and_Kontsevich}).

\begin{definition}
\label{def: weak projective rep}
Let $\sfG$ be a group, let $\sfA$ be a unital commutative $\C$-algebra
whose group of invertible elements is denoted $\sfA^\times$, and let $\sfM$ be an $\sfA$-module (which is a complex vector space since $\sfA$ is unital).
Assume that $\sfA$ carries a $\sfG$-action $\tau \colon \sfG \times \sfA \to \sfA$ compatible with the algebra operations, i.e.~$\tau_g(f\, f') = \tau_g(f)\, \tau_g(f')$ for all $f,f' \in \sfA$ and $g \in \sfG$.
Let $\omega \in \sfC^2(\sfG, \sfA^\times)$ be an $\sfA^\times$-valued 2-cocycle with respect to the action $\tau$.
A \emph{weak projective representation of $\sfG$ on $\sfM$ twisted by $\omega$} is a map
\begin{equation}
	\rho \colon \sfG \times \sfM \longrightarrow \sfM\ ,
	\quad
	(g,\psi) \mapsto \rho_g(\psi)\ ,
\end{equation}
respecting the group structure of $\sfM$ such that
\begin{equation}
	\rho_g (f\triangleright \psi) = \tau_g(f)\triangleright \rho_g(\psi)
	\qandq
	\rho_h \circ \rho_g (\psi) = \omega_{h,g}\triangleright \rho_{h\,g} (\psi)
\end{equation}
for all $f \in \sfA$, $\psi \in \sfM$ and $g,h \in \sfG$.
If $\sfG$ is a Lie group, and $\sfA$ and $\sfM$ carry smooth structures, we additionally demand that $\tau$ and $\rho$ be smooth maps.
\end{definition}

Hence we have found a weak projective representation of $\R^d_\rmt$ twisted by the 2-cocycle $\omega$ of the translation group on $\sfM = \Hscr$, where the relevant algebra is $\sfA = C_b^\infty(M, \C)$, the algebra of bounded smooth functions, with $\R^d_\rmt$-action $\tau_v(f)(x) \coloneqq f(x-v)$ for all $v \in \R^d_\rmt$ and $f \in C_b^\infty(M, \C)$.
The 2-cocycle $\omega$ is in fact trivial in group cohomology;
it is the coboundary of the $C^\infty(M, \sfU(1))$-valued 1-cochain $\lambda$ given by
\begin{align}
	\lambda_v (x) \coloneqq \exp\bigg( -\frac\iu\hbar\, \int_{\triangle^1(x;v)}\, A \bigg) \ .
\end{align}
This trivialisation corresponds to passing from the noncommutative kinematical momentum operators $O_{p_i}$ to the commuting gauge-variant canonical momentum operators $O_{p_i}-A_i(O_x)$~\cite{Jackiw:1984rd}, which at the level of coordinate functions on $\frM$ sends the symplectic form \eqref{eq:sigmarho} to the canonical form $\sigma_0$. 

In the special case of a constant magnetic field, where the 2-form
$\rho=\tilde\rho \in \Omega^2(M)$ is constant, the 2-cocycle $\omega$ factors through $\sfU(1)$ regarded as a trivial $\R^d_\rmt$-module.
That is, denoting by $\imath \colon \sfU(1) \to C^\infty(M, \sfU(1))$ the embedding of $\sfU(1)$ into $C^\infty(M, \sfU(1))$ as constant functions, there exists a 2-cocycle $\tilde{\omega} \colon \R^d_\rmt\times \R^d_\rmt\longrightarrow \sfU(1)$ which makes the diagram
\begin{equation}
\begin{tikzcd}
	\R^d_\rmt \times \R^d_\rmt  \ar[r, "\omega"] \ar[d, dashed, "\tilde{\omega}",swap]  & C^\infty(M, \sfU(1))\\
  \sfU(1) \ar[ru,hookrightarrow, "\imath"'] 
\end{tikzcd} 
\end{equation}
commute; explicitly $\tilde\omega_{v,w}=\exp(-\frac\iu{2\hbar} \, \tilde\rho(v,w))$. In this case the weak projective representation of $\R^d_\rmt$ on $\Hscr$ reduces to an honest projective representation.

The parallel transport operators $P^{\nabla^L}$ used above also
provide the bridge between canonical quantisation and (strict)
deformation quantisation, i.e. the Weyl correspondence between
operators and symbols, see
e.g.~\cite{Mantoiu:2005tb,Soloviev:Dirac_Monopole_and_Kontsevich}. For
completeness and for later reference, let us outline the
correspondence. Let $\Sscr(M)$ denote the space of Schwartz functions
on $M=\R^d$ endowed with its Fr\'echet topology and $\Sscr'(M)$ its
(topological) dual space of tempered distributions.

We begin by recalling from~\cite{Mantoiu:2005tb} the `magnetic Weyl system' which is the family of unitary operators $\{W(X)\}_{X\in\frM}$ on $\Hscr$ defined by
\begin{equation}\label{eq:Weylsystem}
\begin{aligned}
W(X):\Hscr\to\Hscr \ , \quad \big(W(x,p)\psi\big)(y) &=
\e^{\frac{\iu\,\hbar}2\, \langle p,x\rangle}\, \e^{-\ii\langle
  p,y\rangle} \, (P_x\psi)(y) \\[4pt]
&= \e^{\frac{\iu\,\hbar}2\, \langle
  p,x\rangle}\, \e^{-\ii\langle p,y\rangle} \,
P^{\nabla^L}_{\triangle^1(y;x)}\, \psi(y-x) \ .
\end{aligned}
\end{equation}
The `magnetic Weyl quantisation map' then proceeds in analogy with the usual Weyl correspondence: For any $f\in \Sscr(\frM)$ we define a bounded operator $O_f:\Sscr(M)\to\Sscr'(M)$ by
\begin{equation}
f\mapsto O_f =\frac1{(2\pi)^{d/2}} \, \int_\frM\, \bigg(\frac1{(2\pi)^{d/2}} \, \int_\frM\, \e^{\iu\,\sigma_0(X,Y)}\, f(Y) \ \dd Y\bigg)\, W(X)\ \dd X \ ,
\end{equation}
and following the standard terminology we refer to the phase space
function $f=:\Wscr(O_f)$ as the magnetic Weyl symbol of the operator $O_f$;
in particular, one readily checks that the magnetic translation of a
symbol $f\in\Sscr(\frM)$ is realised as the conjugation action
\begin{equation}
O_{P_xf} = W(X)^{-1} \, O_f \, W(X)
\end{equation}
in the magnetic Weyl system~\cite{Iftimie2010}. 

This construction induces a `magnetic Moyal-Weyl star product'
$\star_\rho:\Sscr(\frM)\otimes_\C\Sscr(\frM)\to\Sscr(\frM)$ such that
\begin{equation}
O_{f\star_\rho g} = O_f \, O_g \ .
\end{equation}
Explicitly, it is given by a twisted convolution product defined by the oscillatory integrals
\begin{equation}
(f\star_\rho g)(X) = \frac1{(\pi\,\hbar)^{2d}}\, \int_\frM \ \int_\frM\ \e^{-\frac{2\,\iu}\hbar\, \sigma_0(Y,Z)}\, \omega_{x+y-z,x-y+z}(x-y-z)\, f(X-Y)\, g(X-Z)\ \dd Y\ \dd Z
\end{equation}
where $\omega$ is the 2-cocycle defined in \eqref{EQ: Definition 2
  cocycle}. This star product turns the Schwartz space $\Sscr(\frM)$
into a noncommutative associative $\C$-algebra and has similar properties
to the usual Moyal-Weyl star product which is recovered for $\rho=0$
(i.e. $\omega=1$). In particular, if $\rho=\tilde\rho\in\Omega^2(M)$ is constant,
we can replace $\omega$ by the $\sfU(1)$-valued 2-cocycle $\tilde\omega$ and
write the magnetic Moyal-Weyl star product in terms of the
corresponding symplectic
form \eqref{eq:sigmarho} as
\begin{equation}\label{eq:MoyalWeylconstant}
(f\star_{\tilde\rho} g)(X) = \frac1{(\pi\,\hbar)^{2d}}\, \int_\frM \ \int_\frM\
\e^{-\frac{2\,\iu}\hbar\, \sigma_{\tilde\rho}(Y,Z)}\, f(X-Y)\, g(X-Z)\ \dd Y\
\dd Z \ .
\end{equation}

In the following we aim to generalise these constructions to the case
of magnetic fields on $M$ with sources, whereby $H=\dd\rho\neq0$.

\section{Bundle gerbes on $\R^d$}
\label{Sec: Bundle gerbes}

In this section we review the 2-category of hermitean line bundle gerbes with connection on $\R^d$.
We will usually say bundle gerbe when we mean a hermitean line bundle gerbe with connection.
Bundle gerbes can be understood as a categorification of hermitean line bundles with connection.\footnote{For the construction of the 2-Hilbert space of sections it is important to work with categorified line bundles instead of principal bundles, since it allows the existence of non-invertible morphisms.}
They were introduced in~\cite{Murray--Bundle_gerbes} as a geometric structure that describes the second differential cohomology of the base manifold, in a way analogous to how hermitean line bundles with connection describe the first differential cohomology.
In particular, bundle gerbes are geometric objects that give rise to field strength 3-forms.
By now there is a well-developed theory of the 2-category of bundle gerbes~\cite{Waldorf--More_morphisms,Bunk--Thesis}; a less technical introduction to the framework of bundle gerbes with an eye towards applications in string theory and M-theory can be found in~\cite{Bunk-Szabo--Fluxes_brbs_2Hspaces}.
We point out that bundle gerbes are not the only available geometric model for degree-two differential cocycles.
Other prominent models are, for instance, based on sheaves of groupoids~\cite{Brylinski--Loop_Spaces_and_GeoQuan} and predate the discovery of bundle gerbes.
Models that generalise to other differential cohomology theories have been developed in~\cite{BNV--DiffCoho,HS--DiffCoho}.
For us, however, bundle gerbes are the most convenient model, for they allow for a straightforward interpretation as a categorification of line bundles and their sections.
Thus, they can be related rather directly to quantum mechanics at a conceptual level (see~\cite{BSS--HGeoQuan,Bunk-Szabo--Fluxes_brbs_2Hspaces}), and this conceptual relation to quantum mechanics underlies our treatment of bundle gerbes in the rest of this paper.

\subsection{2-category of trivial bundle gerbes on $\R^d$}

Bundle gerbes without connection on a manifold $M$ are classified by the third integer cohomology $H^3(M,\Z)$ of $M$.
Thus up to equivalence we only need to understand topologically trivial bundle gerbes on $M=\R^d$.
We will concentrate on this situation here.
A trivial hermitean line bundle on $\R^d$ with a non-trivial connection is completely described by its connection 1-form $A \in \Omega^1(\R^d)$.
A morphism of trivial line bundles is equivalently a function $f \in C^\infty(\R^d, \C)$; the morphism is unitary if $f$ is $\sfU(1)$-valued, and if the source and target line bundles carry connections $A_0$ and $A_1$, respectively, the morphism $f$ is parallel if and only if it satisfies $\iu\, A_1\, f = \iu\, f\, A_0 - \dd f$, i.e. precisely if it is a gauge transformation.
Composition of morphisms amounts to multiplication of functions.
Connection 1-forms $A$ and morphisms $f$ assemble into a category $\HLBdl^\nabla_\triv(\R^d)$ of trivial hermitean line bundles with (possibly non-trivial) connection on $\R^d$.
This category is symmetric monoidal under the monoidal product
\begin{equation}
	\big(A_0 \overset{f}{\to} A_1\big) \otimes \big(A'_0 \overset{f'}{\to} A'_1\big) = A_0 + A'_0 \overset{f\, f'}{\to} A_1 + A'_1
\end{equation}
for $A_j, A'_j \in \Omega^1(\R^d)$, $j = 0,1$ and $f, f' \in C^\infty(\R^d, \C)$.
This monoidal product is nothing but the tensor product of line bundles with connection restricted to trivial line bundles; its monoidal unit is $A=0$.

The central idea to categorifying line bundles is to replace scalars by vector spaces.
As we are interested only in topologically trivial bundle gerbes here, we do not need to consider any transition functions.
Under the above paradigm, generic $\C$-valued functions are categorified to hermitean vector bundles.
Finally, in order to obtain curvature 3-forms, the connection on a bundle gerbe must be given by a 2-form.
We let $\frh(n) \subset \Mat(n \times n, \C)$ denote the Lie algebra of
hermitean $n \times n$ matrices.

\begin{definition}
\label{def:BGrb_triv}
The \emph{2-category $\BGrb_\triv^\nabla(\R^d)$ of trivial bundle gerbes on
  $\R^d$} is given as follows:
\begin{myitemize}
	\item An object is given by a 2-form $\rho \in \Omega^2(\R^d)$.
	We also denote the object corresponding to $\rho$ by
        $\Ia_\rho$.
	The 3-form $H \coloneqq \dd \rho \in \Omega^3(\R^d)$ is the \emph{curvature of $\Ia_\rho$}.
	
	\item A 1-morphism $\Ia_{\rho_0} \to \Ia_{\rho_1}$ is a 1-form $\eta \in \Omega^1(\R^d, \frh(n))$ for some $n \in \N_0$.
	This is to be thought of as the connection 1-form of as a hermitean connection on the trivial rank~$n$ hermitean vector bundle $E_\eta = \R^d \times \C^n \to \R^d$.
	A 1-morphism $\eta \colon \Ia_{\rho_0} \to \Ia_{\rho_1}$ is
        \emph{fake flat} if its \emph{fake curvature} $F_\eta -
        (\rho_1 - \rho_0)\cdot\One_n$ vanishes, where $F_\eta = \dd \eta +
        \frac{\iu}{2} \, [\eta, \eta]$ is the field strength of the connection $\dd + \iu\, \eta$ on the bundle $E_\eta$.
	
	\item If the 1-forms $\eta \in \Omega^1(\R^d, \frh(n))$ and $\eta' \in \Omega^1(\R^d, \frh(n'))$ for $n,n' \in \N_0$ are 1-morphisms from $\Ia_{\rho_0}$ to $\Ia_{\rho_1}$, a 2-morphism $\eta \Longrightarrow \eta'$ is a matrix-valued function $\psi \in C^\infty(\R^d, \Mat(n \times n', \C))$.
	A 2-morphism $\psi$ is \emph{unitary} if $\psi(x)$ is unitary for all $x \in \R^d$, and it is \emph{parallel} if it satisfies $\iu\, \eta'\, \psi = \iu\, \psi\, \eta - \dd \psi$.
	A 2-morphism $\psi$ is to be thought of as a morphism $\psi \colon E_\eta \to E_{\eta'}$ of hermitean vector bundles with connection.
	
	\item Composition of 1-morphisms $\eta \colon \Ia_{\rho_0} \to \Ia_{\rho_1}$ and $\eta' \colon \Ia_{\rho_1} \to \Ia_{\rho_2}$ is given by $(\eta, \eta') \mapsto \eta' \otimes \One + \One \otimes \eta$.
	Horizontal composition of 2-morphisms is given by $(\psi, \phi) \mapsto \phi \otimes \One + \One \otimes \psi$.
	
	\item Vertical composition of 2-morphisms is given by pointwise matrix multiplication, i.e. it reads as $(\phi, \psi) \mapsto \psi\, \phi$.
\end{myitemize}
\end{definition}

This is a simplified version of the general 2-category of bundle
gerbes on $\R^d$ in the same way that $\HLBdl^\nabla_\triv(\R^d)$ is a simplification of the category of hermitean line bundles with connection on $\R^d$.
However, as pointed out above, since $\R^d$ is homotopically trivial, the 2-category $\BGrb^\nabla_\triv(\R^d)$ is in fact equivalent to the general 2-category of bundle gerbes on $\R^d$ and will, therefore, be perfectly sufficient for us to use in the present setting.

\begin{remark}
\label{rmk:BGrb_triv makes sense on every M}
The 2-category defined as above makes sense if we
replace $\R^d$ by any manifold $M$.
However, the equivalence to the full 2-category of bundle gerbes does not hold on generic base manifolds $M$.
Nevertheless, the full 2-category of bundle gerbes on $M$ can be constructed from $\BGrb^\nabla_\triv(M)$ by closing this under descent along surjective submersions or, equivalently, good open coverings~\cite{NS--Equivariance_in_higher_geometry}.
\end{remark}

The 2-category $\BGrb^\nabla_\triv(\R^d)$ carries a symmetric monoidal structure, denoted $\otimes$, which is defined as follows:
given two objects $\Ia_\rho, \Ia_{\rho'} \in \BGrb^\nabla_\triv(\R^d)$ we set
\begin{equation}
	\Ia_\rho \otimes \Ia_{\rho'} \coloneqq \Ia_{\rho + \rho'}\ .
\end{equation}
Given 1-morphisms $\eta, \nu \colon \Ia_{\rho_0} \to \Ia_{\rho_1}$ and $\eta', \nu' \colon \Ia_{\rho'_0} \to \Ia_{\rho'_1}$, as well as 2-morphisms $\psi \colon \eta \Longrightarrow \nu$ and $\psi' \colon \eta' \Longrightarrow \nu'$, we set
\begin{equation}
	\eta \otimes \eta' \coloneqq \eta \otimes \One + \One \otimes \eta'
	\qandq
	(\psi \otimes \psi')(x) \coloneqq \psi(x) \otimes \psi'(x)
	\ ,
\end{equation}
for all $x\in\R^d$.
The unit of the symmetric monoidal structure $\otimes$ is the \emph{trivial bundle gerbe with connection $\Ia_0$}.

There is an additional symmetric monoidal structure $\oplus$ on the category of morphisms $\Ia_{\rho_0} \to \Ia_{\rho_1}$.
For 1-morphisms $\eta, \eta', \nu, \nu' \colon \Ia_{\rho_0} \to \Ia_{\rho_1}$, and 2-morphisms $\phi \colon \eta \Longrightarrow \eta'$ and $\psi \colon \nu \Longrightarrow \nu'$, it reads as
\begin{equation}
	(\eta \oplus \eta')|_{x} \coloneqq \eta|_{x} \oplus \eta'|_{x}
	\qandq
	(\phi \oplus \psi)(x) \coloneqq \phi(x) \oplus \psi(x)
	\ ,
\end{equation}
for all $x\in\R^d$.
Note that $\otimes$ is monoidal with respect to $\oplus$ in each argument -- in other words, on 1-morphisms and 2-morphisms $\otimes$ distributes over $\oplus$.
In particular, the category $\BGrb^\nabla_\triv(\R^d)(\Ia_0, \Ia_0)$ of endomorphisms of the trivial bundle gerbe is a categorified ring, or a \emph{rig category}.
Moreover, there exists an action of this rig category on every other morphism category $\BGrb^\nabla_\triv(\R^d)(\Ia_{\rho_0}, \Ia_{\rho_1})$ induced by the tensor product $\otimes$ of (trivial) bundle gerbes.
This turns the morphism categories in $\BGrb^\nabla_\triv(\R^d)$ into \emph{rig module categories} over the rig category $\BGrb^\nabla_\triv(\R^d)(\Ia_0, \Ia_0)$.
Note that
\begin{equation}
	\big( \BGrb^\nabla_\triv(\R^d)(\Ia_0, \Ia_0), \otimes, \oplus \big) \cong \big( \HVBdl_\triv(\R^d), \otimes, \oplus \big) \ ,
\end{equation}
i.e. the rig category of endomorphisms of the trivial bundle gerbe $\Ia_0$ is equivalent to the rig category of trivial hermitean vector bundles with possibly non-trivial connection on $\R^d$.
For gerbes $\Ia_{\rho_j}$ with $\rho_j \neq 0$ for $j = 0,1$ there is still an equivalence
\begin{equation}
	\big( \BGrb^\nabla_\triv(\R^d)(\Ia_{\rho_0}, \Ia_{\rho_1}), \oplus \big) \cong \big( \HVBdl_\triv(\R^d), \oplus \big)
\end{equation}
of rig modules over $\HVBdl_\triv(\R^d)$, but the morphism categories $\BGrb^\nabla_\triv(\R^d)(\Ia_{\rho_0}, \Ia_{\rho_1})$ are not closed under the monoidal product $\otimes$ on the 2-category $\BGrb^\nabla_\triv(\R^d)$ so that they do not form rig categories themselves unless $\rho_0 = \rho_1 = 0$.

\subsection{2-Hilbert space of sections}
\label{sect:2Hspace of sections of triv BGrbs}

If $I = M \times \C$ is the trivial hermitean line bundle on a manifold $M$ and $L$ is an arbitrary hermitean line bundle on $M$, there are canonical isomorphisms
\begin{equation}
	\Gamma(M,L) \cong \HLBdl(M)(I,L)
	\qandq
	C^\infty(M,\C) \cong \Gamma(M,I) \cong \HLBdl(M)(I,I)\ .
\end{equation}
This motivates the following definition.

\begin{definition}
Let $\rho \in \Omega^2(\R^d)$ and let $\Ia_\rho$ be the corresponding trivial bundle gerbe on $\R^d$.
The \emph{category of global sections of $\Ia_\rho$} is
\begin{equation}
	\Gamma(\R^d, \Ia_\rho) \coloneqq \BGrb^\nabla_\triv(\R^d)(\Ia_0, \Ia_\rho)\ .
\end{equation}
For $\rho = 0$ we call $\Gamma(\R^d, \Ia_0) \cong \HVBdl_\triv(\R^d)$ the \emph{rig category of higher functions on $\R^d$}.
\end{definition}

\begin{remark}
\label{eq:sections are equiv to Hilb}
The equivalence of $\Gamma(\R^d, \Ia_0)$-module categories
\begin{equation}
	\big( \Gamma(\R^d, \Ia_\rho), \oplus \big) \cong \big( \HVBdl_\triv(\R^d), \oplus \big)
\end{equation}
holds for any 2-form $\rho \in \Omega^2(\R^d)$, just like the space of sections of a line bundle $L$ is independent of the choice of a connection on $L$.
\end{remark}

The hermitean metric on a hermitean line bundle $L$ can be encoded in a non-degenerate positive-definite $C^\infty(M,\C)$-sesquilinear morphism
\begin{equation}
	h_L \colon \Gamma(M,L) \times \Gamma(M,L) \to C^\infty(M,\C)\ .
\end{equation}
Given a trivial bundle gerbe $\Ia_\rho$ on $\R^d$, sections $\eta, \eta', \nu, \nu' \colon \Ia_0 \to \Ia_\rho$, and 2-morphisms $\phi \colon \eta \Longrightarrow \eta'$ and $\psi \colon \nu \Longrightarrow \nu'$, we define a functor
\begin{equation}
[-,-] \colon \Gamma(\R^d, \Ia_\rho)^\opp \times \Gamma(\R^d, \Ia_\rho)
\to \HVBdl_\triv(\R^d)
\end{equation}
by
\begin{equation}
[\eta, \nu] \coloneqq \nu \otimes \One + \One \otimes (-\eta^\sft)
\qquad \mbox{and} \qquad [\phi, \psi] \coloneqq \psi \otimes \phi^\sft\ .
\end{equation}
This is the explicit expression on trivial vector bundles of forming homomorphism bundles.
We view the bifunctor $[-,-]$ as the higher analogue of a hermitean bundle metric.
This idea was developed in the general setting of bundle gerbes in~\cite{BSS--HGeoQuan,Bunk--Thesis}; a less technical treatment can be found in~\cite{Bunk-Szabo--Fluxes_brbs_2Hspaces}.
Now let $\Gamma_\rmpar \colon \HVBdl \to \Hilb$ denote the functor
that takes parallel global sections\footnote{A section of a hermitian vector bundle is \emph{parallel} if it is annihilated by the covariant derivative.}, where $\Hilb$ is the 
category of finite-dimensional complex Hilbert spaces. Then assigning to a pair $\eta, \nu \colon \Ia_0 \to \Ia_\rho$ of sections of a bundle gerbe on $\R^d$ the finite-dimensional Hilbert space%
\footnote{This space is finite-dimensional since the dimension of the space of parallel sections of a vector bundle on a path-connected manifold is bounded from above by the rank of the vector bundle.}
\begin{equation}
\label{eq:inner product on cats of sections}
	\< \eta, \nu \> \coloneqq \Gamma_\rmpar \big( \R^d, [\eta, \nu] \big)
\end{equation}
provides a categorified hermitean inner product bifunctor
\begin{equation}
	\<-,-\> \colon \Gamma(\R^d, \Ia_\rho)^\opp \times \Gamma(\R^d, \Ia_\rho) \to \Hilb
\end{equation}
on the category of smooth sections of $\Ia_\rho$.

Finally, the rig category $(\Hilb, \otimes, \oplus)$ of
finite-dimensional Hilbert spaces naturally embeds into the rig
category $(\HVBdl_\triv(\R^d), \otimes, \oplus)$ by assigning to a
finite-dimensional Hilbert space $V$ the trivial bundle with the
trivial connection $\imath(V) \coloneqq (\R^d \times V \to \R^d, \dd)$ and to a linear map $\psi \in \Hilb(V,W)$ the constant bundle morphism $\imath(\psi)$ with $\imath(\psi)(x) \coloneqq \psi$ for all $x \in \R^d$.
We view the functor $\imath$ as the higher analogue of how scalars give rise to constant functions on a manifold.
In this way $\Gamma(\R^d, \Ia_\rho)$ becomes a rig module category over $\Hilb$.

\begin{definition}
\label{def:2Hspace of sections}
Let $\Ia_\rho \in \BGrb^\nabla_\triv(\R^d)$ be a trivial bundle gerbe with connection on $\R^d$.
The $\Hilb$-module category $\Gamma(\R^d, \Ia_\rho)$ together with the $\Hilb$-sesquilinear bifunctor $\<-,-\>$ is called the \emph{2-Hilbert space of sections of $\Ia_\rho$}.
\end{definition}

\section{Weak projective 2-representations}
\label{sect:weak projective 2-reps}

In this section we introduce a higher version of Definition~\ref{def: weak projective rep} by replacing both the algebra $\sfA$ and the module $\sfM$ by analogous categorified objects, while leaving the structure of the group $\sfG$ unchanged.
Our main interest will concern actions of $\sfG$ on the 2-Hilbert space of sections of a bundle gerbe.
In Section~\ref{Sec: Bundle gerbes} we have seen that the 2-Hilbert spaces arising in this way can be considered as module
categories over the category of hermitean vector bundles with
connection, which in turn can be regarded as an algebra over $\Hilb$, the
category of finite-dimensional complex Hilbert spaces.
In order to approach the final definition, we first define higher 2-cocycles and projective 2-representations of groups on module categories over a symmetric monoidal category.
We then generalise these definitions to allow for weak representations by functors which do not preserve the module structure strictly.
Our discussion follows and generalises ideas
from~\cite{Fiorenza:2014kga, nCat=Cohomology} in the language of 3-categories.\footnote{See~\cite{GPS95} for the corresponding definitions.}
A detailed discussion of projective representations in the language of
2-categories can be found in~\cite[Section~3.4]{Parity}.
Throughout we make the structure and relations of higher (weak) 2-cocycles and
(weak) projective 2-representations very explicit.

\subsection{Category-valued 2-cocycles}

To set the stage we recall the definition of a group action on a
category.

\begin{definition}\label{Def: Group action on categories}
Let $\sfG$ be a group and $\mathcal{C}$ a category. An \emph{action of $\sfG$ on $\mathcal{C}$} is a 2-functor 
\begin{align}
\Theta \colon \underline{\rmB\sfG}_{\,2} \to \rmB \text{Aut}(\mathcal{C}) \ .
\end{align} 
\end{definition} 

\begin{remark}
Unpacking this compact definition we obtain the following data:
\begin{myitemize}
\item 
A functor $\Theta_g \colon \mathcal{C}\to
\mathcal{C}$ for every $g\in \sfG$ ;

\item
Natural isomorphisms $\Pi_{g,h}\colon
\Theta_g \circ \Theta_h \Longrightarrow \Theta_{g\,h}$ for every $g,h \in \sfG$ ; 

\item 
A natural isomorphism $\theta \colon \Theta_1 \Longrightarrow \id_\mathcal{C}$ ;
\end{myitemize}
satisfying the relations
\begin{align}
	\Pi_{g,h\, k} \bullet \big(\Pi_{g,h} \circ \id_{\Theta_k}\big) =  \Pi_{g,h\,k} \bullet \big(\id_{\Theta_g} \circ \Pi_{h,k}\big) \qquad \text{and} \qquad \Pi_{1,g} =\Pi_{g,1} = \theta
\end{align}
for all $g,h,k \in \sfG$, where $\bullet$ denotes the vertical composition in the 2-category $\rmB\mathrm{Aut}(\Ca)$.
\end{remark}

We now fix a symmetric monoidal category $(\mathcal{R},\otimes,1)$ and let $\text{Pic} (\mathcal{R})$ denote the Picard groupoid of $\mathcal{R}$, which is the maximal subgroupoid of $\Ra$ on the objects that are invertible with respect to the monoidal product.

\begin{definition}
Let $\sfG$ be a group.
A \emph{higher 2-cocycle on $\sfG$ with values in a symmetric monoidal category $\mathcal{R}$} is a $3$-functor
\begin{equation}
	\omega \colon \underline{\rmB \sfG}_{\,3} \longrightarrow \rmB^{2} \text{Pic} (\mathcal{R})\ .
\end{equation}
\end{definition}
\begin{remark}\label{Rem: Def. h. 2-Cocycles}
Spelling out the definition of a higher 2-cocycle as a 3-functor we obtain the following structure, which is similar to~\cite[Remark~3.8]{HFixPoint}:
\begin{myitemize}
	\item An object $\imath \in \text{Pic} (\mathcal{R})$\footnote{The element $\imath$ encodes the coherence 2-isomorphism $\omega(1)\Longrightarrow \id $. } ;
	
	\item An
          object $\chi_{g,h}\in \text{Pic} (\mathcal{R})$ for all pairs $g, h \in \sfG$ of group elements ;
	
	\item An isomorphism $\omega_{g,h,k} \colon \chi_{g\,h,k} \otimes \chi_{g,h}\longrightarrow \chi_{g,h\,k} \otimes \chi_{h,k}$ in $\Ra$ for all $g, h, k \in \sfG$ 
	(compare~\cite[Eq.~(3.7)]{HFixPoint}) ;
	
	\item Isomorphisms $\gamma_g \colon \imath \otimes
          \chi_{1,g}\longrightarrow 1$ and $\delta_g \colon 1
          \longrightarrow \chi_{g,1}\otimes \imath $ for every element
          $g\in \sfG$ (compare~\cite[Eqs.~(3.8) and~(3.9)]{HFixPoint}).
\end{myitemize}
All other structure is trivial by the properties of the 3-categories involved. We have to replace modifications in \cite{HFixPoint} with 3-morphisms in our case.
This data is subject to the following conditions:
\begin{myitemize}
	\item
	The diagram
	\begin{equation}
	\begin{tikzcd}
		\chi_{g\,h\,k,l} \otimes \chi_{g,h\,k} \otimes \chi_{h,k} \ar[rr,"\omega_{g,h \,k, l}\otimes \id"] & & \chi_{g,h\,k\,l} \otimes \chi_{h\,k,l} \otimes \chi_{h,k} \ar[dr,"\id \otimes \omega_{h,k,l}"] & \\
		& & & \chi_{g,h\,k\,l} \otimes \chi_{h,k\,l} \otimes \chi_{k,l} \\
		\chi_{g\,h\,k,l} \otimes \chi_{g\,h,k} \otimes \chi_{g,h} \ar[uu, "\id \otimes \omega_{g,h,k}"] \ar[rr, "\omega_{g\, h,k,l}\otimes \id",swap] & & \chi_{g\,h,k\,l} \otimes \chi_{k,l} \otimes \chi_{g,h} \ar[ru, "\omega_{g,h,k\, l}\otimes \id",swap] &
	\end{tikzcd}
	\end{equation}
	commutes~\cite[Axiom~(HTA1)]{GPS95};
	
	\item The identity $(\id_{\chi_{g,h}} \otimes \gamma_h)\circ \omega_{g,1,h}
          \circ (\id_{\chi_{g,h}} \otimes \delta_h)=\id_{\chi_{g,h}}$ holds in
          $\Ra$~\cite[Axiom~(HTA2)]{GPS95};
\end{myitemize} 
for all $g,h,k,l\in\sfG$. We complement this by the simplifying normalisation conditions
\begin{align}
	\imath&=\chi_{g,1}= \chi_{1,g}=1 \ ,
	\\[4pt]
	\delta_g&=\gamma_g= \id \ ,
	\\[4pt]
	\omega_{g,h,1}&= \omega_{g,1,h}=\omega_{1,g,h}=\id\ ,
\end{align}
for all $g,h\in \sfG$.
\end{remark}

There is a natural notion of morphisms between higher 2-cocycles given by lax 3-natural transformations (that are strict with respect to identities). 
The problem with this definition is that the standard definition of a 3-natural transformation, appearing for example in \cite{GPS95}, is that of an op-lax 3-natural transformation in the terminology of \cite{Johnson-Freyd_Scheimbauer}.
To our knowledge, the definition of a lax 3-natural transformation is
not spelled out explicitly in the literature, and we refrain from doing so in this paper.
Instead we give a more concrete definition, which we arrived at by reversing the directions of arrows in the definition corresponding to op-lax 3-natural transformations.

\begin{definition}
\label{Coboundary}
Let $(\chi, \omega)$ and $(\chi', \omega')$ be higher 2-cocycles on a
group $\sfG$ with values in a symmetric monoidal category $\Ra$.
A \emph{higher 2-coboundary} $(\lambda , \Lambda) \colon (\chi, \omega) \longrightarrow (\chi', \omega')$ is given by the data of:
\begin{myitemize}
	\item 
	An object $\lambda_g \in \mathcal{R}$ for every group element
        $g \in \sfG$ ; 
	
	\item 
	Isomorphisms $\Lambda _{g,h} \colon \chi_{g,h}\otimes
        \lambda_g \otimes \lambda_h \longrightarrow
        \lambda_{g\,h}\otimes \chi'_{g,h}$ for all $g,h \in \sfG$ ;
\end{myitemize}
satisfying the normalisation conditions $\lambda_1= 1$, $\Lambda _{1,g} = \Lambda _{g,1} = \id$ and the coherence condition given by the commutativity of the diagram
\begin{equation}
\begin{tikzcd}[column sep=1.5cm, row sep=1cm]
	\chi_{g\,h,k}\, \chi_{g,h}\, \lambda_g\, \lambda_h\, \lambda_k \ar[r,"\omega_{g,h,k}"] \ar[d, "\Lambda _{g,h}"'] & \chi_{g,h\,k}\, \chi_{h,k}\, \lambda_g\, \lambda_h\, \lambda_k \ar[r,"\Lambda _{h,k}"] & \chi_{g,h\,k}\, \chi'_{h,k}\, \lambda_{g}\, \lambda_{h\,k} \ar[d, "\Lambda _{g,h\,k}"]
	\\
	\chi_{g\,h,k}\, \chi'_{g,h}\, \lambda_{g\,h}\, \lambda_k \ar[r,"\Lambda _{g\,h,k}"'] &\chi'_{g\,h,k}\, \chi'_{g,h}\, \lambda_{g\,h\,k} \ar[r,"\omega'_{g,h,k}"'] & \chi'_{g,h\,k}\, \chi'_{h,k}\, \lambda_{g\,h\,k} 
\end{tikzcd}
\end{equation}
for all $g,h,k \in \sfG$.
Here we do not display monoidal products, braidings or identities for brevity.
If $(\lambda , \Lambda) \colon (\chi, \omega) \longrightarrow (\chi', \omega')$ and $(\lambda' , \Lambda') \colon (\chi', \omega') \longrightarrow (\chi'', \omega'')$ are two higher 2-coboundaries, their composition reads as $(\lambda \otimes \lambda', \Lambda' \circ \Lambda)$.
\end{definition}

\begin{remark}
There is a natural definition of morphisms between higher 2-coboundaries and of morphisms between these morphisms which we do not spell out explicitly.
Formally, this stems from the fact that our definitions can be seen to take place in the 4-category of 3-categories.
\end{remark}

\begin{definition}
The \emph{second group cohomology of $\sfG$ with values in the Picard
groupoid of a symmetric monoidal category $\Ra$} is the abelian group
$H^2(\sfG,\text{Pic}(\mathcal{R}))$ obtained as the quotient of the
collection of higher $\Ra$-valued 2-cocycles $(\chi, \omega)$ on
$\sfG$ by the equivalence relation $(\chi, \omega) \sim (\chi', \omega')$ if and only if there exists a higher 2-coboundary $(\lambda, \Lambda) \colon (\chi, \omega) \to (\chi', \omega')$.
\end{definition}

A symmetric monoidal functor $F\colon \mathcal{R} \longrightarrow \mathcal{R}'$ induces a natural map
\begin{equation}
F_* \colon H^2 \big( \sfG,\text{Pic}(\mathcal{R}) \big)
\longrightarrow H^2 \big( \sfG,\text{Pic} (\mathcal{R}') \big) \ ,
	\quad
	[\chi, \omega] \mapsto [F(\chi), F (\omega) ]\ .
\end{equation}

\subsection{Projective 2-representations on module categories}

We can embed $\rmB^2 \text{Pic} (\mathcal{R})$ into the 3-category
$\text{BiCat}$ of bicategories by sending the only object to the
bicategory $\Ra\text{-mod}$ of $\mathcal{R}$-module categories, module functors and natural transformations, the only 1-morphism to the identity, objects $r \in \mathcal{R}$ (regarded as 2-morphisms) to the 2-natural transformation $\ell_r \colon \id_{\mathcal{R}\text{-mod}} \Longrightarrow \id_{\mathcal{R}\text{-mod}}$ with components $\ell_{r|\Ma} \colon \Ma \longrightarrow \Ma$ given by the action of $r$ on $\Ma$ for every $\mathcal{R}$-module category $\Ma$, and a morphism $f \colon r \longrightarrow r'$ to the induced modification $ f\colon \ell_r \Rrightarrow \ell_{r'}$. We denote the composition of a higher 2-cocycle $\omega$ with this embedding again by $\omega$. 
Using this embedding we can give an elegant definition of a projective 2-representation $\rho$ twisted by a higher 2-cocycle $\omega$ with values in $\mathcal{R}$ as a (lax) 3-natural transformation
\begin{equation}
\begin{tikzcd}
 & \, & \\
\underline{\rmB \sfG}_{\,3} \ar[rr, bend left, "\omega"] \ar[rr, bend right, "1", swap] & & \text{BiCat} \\
 &\, \ar[uu, Rightarrow, shorten <=18 ,shorten >=18, "\, \rho",swap] & 
\end{tikzcd}
\end{equation}
where $1$ is the constant 3-functor at the terminal 2-category with only one object, 1-morphism and 2-morphism.
Again, the problem with this definition is that the standard
definition of a 3-natural transformation, appearing for example
in~\cite{GPS95}, is that of an op-lax 3-natural transformation in the
terminology of~\cite{Johnson-Freyd_Scheimbauer}. Hence we give a more
concrete definition.

\begin{definition}
\label{Rem: Def higher projectiv representation}
Let $\sfG$ be a group, $\Ra$ a symmetric monoidal category and $(\chi, \omega)$ a higher 2-cocycle on $\sfG$ with values in $\Ra$.
A \emph{projective 2-representation of $\sfG$ over $\Ra$ twisted by $(\chi, \omega)$} consists of:
\begin{myitemize}
	\item An $\mathcal{R}$-module category $\Ca$ ;
	
	\item An $\Ra$-module functor $\Theta_g \colon \Ca
          \longrightarrow \Ca $ for every group element $g \in \sfG$ ;
	
	\item A natural isomorphism $\Pi_{g,h} \colon \Theta_g \circ
          \Theta_h \Longrightarrow \ell_{\chi_{g,h}}  \circ
          \Theta_{g\,h}$ for each pair $g, h \in \sfG$ ;
	
	\item A natural isomorphism $\theta \colon \Theta_1
          \Longrightarrow \id_\Ca $ . 
\end{myitemize}
These data are subject to the coherence conditions given by  $\Pi_{1,g} = \Pi_{g,1} = \theta$ and the commutativity of the diagram
\begin{equation}
\label{EQ: Diagram definition Projective}
\begin{tikzcd}[column sep=1.25cm, row sep=1cm]
	\Theta_g \circ \ell_{\chi_{h,k}} \circ \Theta_{h\,k} \ar[r,Rightarrow] & \ell_{\chi_{h,k}} \circ \Theta_g  \circ \Theta_{h\,k} \ar[r,Rightarrow, "\Pi_{g,h\,k}"] & \ell_{\chi_{g,h\,k}} \circ \ell_{\chi_{h,k}}  \circ \Theta_{g\,h\,k}
	\\
	\Theta_g \circ  \Theta_h \circ \Theta_k \ar[r,Rightarrow,"\Pi_{g,h}"'] \ar[u,Rightarrow, "\Theta_g(\Pi_{h,k})"]& \ell_{\chi_{g,h}}\circ  \Theta_{g\,h}  \circ \Theta_{k} \ar[r,Rightarrow,"\Pi_{g\,h,k}"'] & \ell_{\chi_{g\,h,k}} \circ \ell_{\chi_{g,h}} \circ \Theta_{g\,h\,k} \ar[u,Rightarrow,swap, "\ell_{\omega_{g,h,k}}"]
\end{tikzcd}
\end{equation}
for all $g,h,k \in \sfG$.
\\
We again impose the simplifying normalisation conditions $\Theta_{1} = \id_\Ca$ and $\theta = \id_{\id_\Ca}$.
\end{definition}

The unlabelled isomorphism in the diagram~\eqref{EQ: Diagram
  definition Projective} arises from the property that $\Theta_g$ is
an $\Ra$-module functor -- it commutes with all functors of the form
$\ell_r \colon \Ca \to \Ca$ for $r \in \Ra$ up to coherent isomorphism.
If the higher 2-cocycle $(\chi, \omega)$ is trivial, i.e. if $\chi_{g,h} = 1_\Ca$ is the monoidal unit in $\Ca$ and $\omega_{g,h,k} = \id$ for all $g,h,k \in \sfG$, the data of a projective 2-representation of $\sfG$ on $\Ca$ reduces to that of an honest representation of $\sfG$ on $\Ca$ by $\Ra$-module functors $\Theta_g$.
These still do not respect the group multiplication strictly, but only
up to coherent isomorphism $\Pi_{g,h}$.

\subsection{Weak projective 2-representations}

In Section \ref{Sec: Case without sources} we saw that if the line bundle with connection $(L,\nabla^L)$ on $\R^d$ has non-constant curvature, the parallel transport on $L$ only induces a \emph{weak} projective representation of the translation group on the space of sections of $L$.
Motivated by this observation we proceed to categorify Definition~\ref{def: weak projective rep}.
The difference between a projective representation of $\sfG$ on an
$\sfA$-module $\sfM$ and a weak projective representation is that in
the latter case the algebra $\sfA$ carries a non-trivial $\sfG$-action $\tau$ itself, and the $\sfG$-action on $\sfM$ is by weak module maps relative to $\tau$ (see Definition~\ref{def: weak projective rep}).
In the categorified formalism, this affects the unlabelled isomorphism
in the diagram~\eqref{EQ: Diagram definition Projective}, which was a
consequence of the property that the representing functors $\Theta_g$ are $\Ra$-module functors.
We thus have to introduce a more general definition of these functors.

\begin{definition}
Given two $\mathcal{R}$-module categories $\Ca$ and $\Ca'$, a
\emph{twisted $ \mathcal{R}$-module functor $\Ca \longrightarrow \Ca'$} 
is a pair of functors $(F \colon \Ca \longrightarrow \Ca' , \tau
\colon \mathcal{R} \longrightarrow \mathcal{R})$, where $\tau$ is
symmetric monoidal, together with natural isomorphisms $\eta_{r,c}
\colon F(r \otimes c) \longrightarrow \tau(r) \otimes F(c)$ for all
objects $r\in\Ra$ and $c\in\Ca$ satisfying the usual coherence conditions. 
\end{definition}

Using this definition we can introduce the notion of a higher weak
2-cocycle and a weak projective 2-representation generalising the
analogous notions of Section~\ref{Sec: Case without sources}. 

\begin{definition}
An \emph{action of a group $\sfG$ on a symmetric monoidal category $\mathcal{R}$} is a 2-functor
\begin{align*}
\tau \colon \underline{\rmB \sfG}_{\,2} \longrightarrow \rmB\text{Aut}_\otimes( \mathcal{R}) \ ,
\end{align*} 
where $\text{Aut}_\otimes$ is the groupoid of monoidal autofunctors and monoidal natural isomorphisms between them.
\end{definition}

\begin{definition}\label{def:higher weak 2-cocycle}
Let $\tau$ be an action of a group $\sfG$ on a symmetric monoidal
category $\Ra$.
A \emph{higher weak 2-cocycle on $\sfG$ with values in $\mathcal{R}$ twisted by $\tau$} amounts to giving:
\begin{myitemize}
	\item An object $\chi_{g,h}$ of $\Pic(\Ra)$ for every pair of
          elements $g,h \in \sfG$ ;

	\item An isomorphism $\omega_{g,h,k} \colon \chi_{g\,h,k} \otimes \chi_{g,h}\longrightarrow \chi_{g,h\,k} \otimes \tau(g)[\chi_{h,k}]$  for every triple $g,h,k \in \sfG$ ;
\end{myitemize}
such that $\chi_{g,1} = \chi_{1,g} = 1$ for all $g \in \sfG$, and, supressing isomorphisms corresponding to symmetric monoidal functors and $\tau$, the diagram
\begin{equation}\label{Eq: 3-cocycle condition}
\small
\begin{tikzcd}
 	\chi_{g\,h\,k,l} {\otimes} \chi_{g,h\,k} {\otimes} \tau(g)[\chi_{h,k}] \ar[rr,"\omega_{g,h\, k, l}{\otimes} \id"]  & & \chi_{g,h\,k\,l} {\otimes} \tau(g) [\chi_{h\,k,l}] {\otimes} \tau(g)[\chi_{h,k}] \ar[dr,"{\id {\otimes} \tau(g)[\omega_{h,k,l}]}"] &
 	\\
 	& & & \chi_{g,h\,k\,l} {\otimes} \tau(g)[ \chi_{h,k\,l} {\otimes} \tau(h)[\chi_{k,l}] ]
 	\\
\chi_{g\,h\,k,l} {\otimes} \chi_{g\,h,k} {\otimes} \chi_{g,h} \ar[uu, "{\id {\otimes}  \omega_{g,h,k}}"] \ar[rr, "\omega_{g\, h,k,l}{\otimes} \id"'] & & \chi_{g\,h,k\,l} {\otimes} \tau(g\,h) [\chi_{k,l}] {\otimes} \chi_{g,h} \ar[ru, "\omega_{g,h,k\, l}{\otimes} \id"'] &
\end{tikzcd}
\normalsize
\end{equation}
commutes for all $g,h,k,l \in \sfG$, while if any one entry of $\omega$ is $1$ then $\omega$ is the identity up to structure isomorphisms.
\\
We will sometimes denote this data by the short-hand notation $(\chi, \omega, \tau)$.
\end{definition}

From the point of view of group cohomology, higher weak 2-cocycles are just higher versions of 2-cocycles valued in non-trivial $\sfG$-modules.
The adjective `weak' may thus seem superfluous in this instance, but
we choose this nomenclature to stress their relation to weak projective 2-representations defined below.
We also generalise Definition~\ref{Coboundary}.

\begin{definition}
Let $\sfG$ be a group, $\Ra$ a symmetric monoidal category, $\tau$ a $\sfG$-action on $\Ra$, and let $(\chi, \omega,\tau)$ and $(\chi', \omega',\tau)$ be $\Ra$-valued higher weak 2-cocycles on $\sfG$ twisted by $\tau$.
A \emph{higher weak 2-coboundary} $(\lambda, \Lambda) \colon (\chi,\omega,\tau) \longrightarrow (\chi',\omega',\tau)$ is given by specifying:
\begin{myitemize}
	\item An object $\lambda_g \in \mathcal{R}$ for each $g\in
          \sfG$ ;
	
	\item Isomorphisms $\Lambda_{g,h} \colon \chi_{g,h} \otimes
          \lambda_g \otimes \tau(g)[\lambda_h] \longrightarrow
          \lambda_{g\,h} \otimes \chi'_{g,h}$ for every pair of group
          elements $g,h \in \sfG$ ;
\end{myitemize}
subject to the conditions that $\lambda_1= 1$, that $\Lambda_{1,g} = \Lambda_{g,1} = \id $ for all $g \in \sfG$, and that the diagram
\begin{equation}
\label{EQ: Weak coboundary}
\small
\begin{tikzcd}[column sep=1.5cm, row sep=1cm]
	\chi_{g\,h,k}\, \chi_{g,h}\, \lambda_g\, \tau(g) \big[ \lambda_h\, \tau(h)[ \lambda_k] \big] \ar[r,"\omega_{g,h,k}"] \ar[d, "{\Lambda_{g,h}}"']
	& \chi_{g,h\,k}\, \tau(g)[\chi_{h,k}]\, \lambda_g\, \tau(g) \big[ \lambda_h\, \tau(h)[ \lambda_k] \big] \ar[d,"{\tau(g)[\Lambda_{h,k}]}"]
	\\
	\chi_{g\,h,k}\, \lambda_{g\,h}\, \tau(g\,h)[\lambda_k]\, \chi_{g,h}' \ar[d,"\Lambda_{g\,h,k}"']  & \chi_{g,h\,k}\, \lambda_{g}\, \tau(g)[\chi'_{h,k}\, \lambda_{h\,k}] \ar[d, "\Lambda_{g,h\,k}"]
	\\
	 \chi'_{g\,h,k}\, \chi'_{g,h}\, \lambda_{g\,h\,k} \ar[r,"\omega'_{g,h,k}"'] & \chi'_{g,h\,k}\, \tau(g)[\chi'_{h,k}]\, \lambda_{g\,h\,k} 
\end{tikzcd}
\normalsize
\end{equation}
commutes for all $g,h,k \in \sfG$.
\end{definition}

Adjusting Definition \ref{Rem: Def higher projectiv representation} to the case of higher weak 2-cocycles we arrive at the central concept of this paper.
\begin{definition}
A \emph{weak projective 2-representation of a group $\sfG$ on an $\mathcal{R}$-module category $\Ca$ twisted by a higher weak 2-cocycle $(\chi, \omega, \tau)$} consists of the following data:
\begin{myitemize}
	\item A twisted $\mathcal{R}$-module functor $(\Theta_g,\tau(g)) \colon \Ca\longrightarrow \Ca $ for every $g \in \sfG$ ;
	
	\item A natural isomorphism $\Pi_{g,h} \colon \Theta_g \circ
          \Theta_h \Longrightarrow \ell_{\chi_{g,h}} \circ \Theta_{g\,h}$ for all pairs $g, h \in \sfG$ .
\end{myitemize}
These data are subject to the conditions that $\Theta_{1}=\id$, that
$\Pi_{1,g} = \Pi_{g,1}= \id$ for all $g \in \sfG$, and also that the diagram
\begin{equation}
\label{EQ: Diagram definition weak Projective} 
\begin{tikzcd}[column sep=1.25cm, row sep=1cm]
	\Theta_g \circ \ell_{\chi_{h,k}}\circ  \Theta_{h\,k} \ar[r,Rightarrow]  & \ell_{\tau(g)[\chi_{h,k}]} \circ \Theta_g \circ \Theta_{h\,k}  \ar[r,Rightarrow, "\Pi_{g,h\,k}"] & \ell_{\chi_{g,h\,k}} \circ \ell_{\tau(g)[\chi_{h,k}]} \circ \Theta_{g\,h\,k} 
	\\
	\Theta_g \circ  \Theta_h \circ \Theta_k \ar[r,Rightarrow,"\Pi_{g,h}"'] \ar[u,Rightarrow, "\Theta_g(\Pi_{h,k})"] & \ell_{\chi_{g,h}}\circ \Theta_{g\,h} \circ \Theta_{k} \ar[r,Rightarrow,"\Pi_{g\,h,k}"'] & \ell_{\chi_{g\,h,k}}\circ \ell_{\chi_{g,h}}\circ \Theta_{g\,h\,k} \ar[u,Rightarrow,"\ell_{\omega_{g,h,k}}"']
\end{tikzcd}
\end{equation}
commutes for all $g,h,k \in \sfG$. Here the unlabelled isomorphism
comes from $\Theta_g$ being a twisted $\mathcal{R}$-module functors.
\end{definition}

Given a weak projective 2-representation $(\Theta, \Pi)$ of $\sfG$ on
$\Ca$ twisted by a higher weak 2-cocycle $(\chi, \omega, \tau)$ and a
higher weak 2-coboundary $(\lambda, \Lambda) \colon (\chi, \omega, \tau) \longrightarrow (\chi', \omega', \tau)$ we can define a new weak projective 2-representation $\lambda_*(\Theta, \Pi)$ of $\sfG$ on $\Ca$ twisted by $(\chi', \omega', \tau)$ by setting
\begin{equation}
\label{eq:pushforward of wp2reps}
(\lambda_* \Theta)_g \coloneqq \ell_{\lambda_g} \circ \Theta_g \qquad
\mbox{and} \qquad
	(\lambda_* \Pi)_{g,h} \coloneqq \Lambda_{g,h} \circ \Pi_{g,h}
        \ ,
\end{equation}
for all $g,h\in\sfG$. 

\begin{definition}
\label{Morphisms of weak projective representations}
Let $\Ca$ and $\Ca'$ be $\Ra$-module categories.
A \emph{morphism of weak projective 2-represen{-}tations $(\Ca,\Theta,\Pi) \longrightarrow (\Ca', \Theta',\Pi')$ twisted by the same higher weak 2-cocycle $(\chi, \omega, \tau)$} consists of:
\begin{myitemize}
	\item An $\mathcal{R}$-module functor $\varphi \colon \Ca
          \longrightarrow \Ca'$ ;
	
	\item A natural isomorphism $m_g \colon \varphi \circ \Theta_g
          \Longrightarrow \Theta'_g \circ \varphi$ for every $g \in \sfG$ ;
\end{myitemize}
satisfying $m_1 = \id$ and the coherence condition
\begin{equation}
\begin{tikzcd}[row sep=1cm, column sep=1.cm]
	\Ca \ar[rrrr, bend right=30, "\Theta_{g\,h}"', pos=0.4, ""{name=U, inner sep=0pt, below, pos=.5}] \ar[dd,"\varphi"'] \ar[r, "\Theta_h"] & \Ca \ar[rr, "\Theta_g"] & {} \ar[d, Rightarrow, "\Pi_{g,h}", shorten >= 10, pos=0.3] & \Ca & \Ca \ar[dd,"\varphi"] \ar[l, "\ell_{\chi_{g,h}}"']
	\\
	& & {} & &
	\\
	\Ca' \ar[rrrr, "\Theta'_{g\,h}",swap] & & |[alias=V]| {} & & \Ca'
	\arrow[Rightarrow, from=U, to=V, "m_{g\,h}", shorten <= 5, shorten >= 5]
\end{tikzcd}
	\quad = \quad 
\begin{tikzcd}[row sep=1cm, column sep=1.cm]
	\Ca \ar[dd,"\varphi"'] \ar[r, "\Theta_h"] & \Ca \ar[ld, Rightarrow, "m_h", shorten >= 5, shorten <= 5] \ar[rr, "\Theta_g"] \ar[d,"\varphi"]& \ar[d,Rightarrow, "m_g", shorten >= 5, pos=0.5] & \Ca\ar[d,"\varphi"'] & \Ca \ar[dd,"\varphi"] \ar[l, "\ell_{\chi_{g,h}}"']
	\\
	{} & \Ca'\ar[rr,"\Theta'_g", pos=0.25] & \ar[d, Rightarrow, "\Pi'_{g,h}", shorten >= 5] & \Ca' &\\
	\Ca' \ar[ru, "\Theta_h'"] \ar[rrrr, "\Theta'_{g\,h}",swap] & & {} & & \Ca' \ar[ul,"\ell_{\chi_{g,h}}"]
\end{tikzcd}
\end{equation}
for all $g,h \in \sfG$, where the right-most quadrangle commutes because $\varphi$ is an $\Ra$-module functor.
\end{definition} 

\section{Nonassociative magnetic translations}
\label{sect:weak projective rep of translations}

Let $\rho\in\Omega^2(M)$ be a magnetic field on the configuration space $M=\R^d$. Recall from Section \ref{Sec: Bundle gerbes} that the category
$\Gamma(M,\mathcal{I}_\rho)$ underlying the 2-Hilbert space of
sections of a trivial bundle gerbe $\Ia_\rho$ on $M$ has the following description. Objects are
1-forms $\eta$ on $M$ with values in the Lie algebra $\frh(n)$ of
hermitean $n\times n$ matrices for some $n \in \N_0$.
A morphism $f \colon \eta \longrightarrow \eta'$ from $\eta \in \Omega^1(M,\frh(n))$ to $\eta' \in \Omega^1(M,\frh(n'))$ is a $\Mat(n\times n',\C)$-valued function $f$ on $M$, and we call $f$ parallel if it satisfies
\begin{align}
\label{Condition morphisms}
	\iu\, \eta'\, f = \iu\, f\, \eta - \dd f \ . 
\end{align}  
In this section we will show that the translation group has a natural weak projective 2-representation on the 2-Hilbert space $\Gamma(M,\mathcal{I}_\rho)$. 
Concretely, in the notation of Section~\ref{sect:weak projective
  2-reps} we take $\Ra =  (\HVBdl_\triv(M), \otimes)$ as the
ambient symmetric monoidal category, $\Ma = (\Gamma(M, \Ia_\rho),
\oplus)$ as a module category over $\Ra$, and $\tau(v) = \tau_{-v}^*$ as the
action of $\sfG = \R^d_\rmt$ on $\Ra$ for $v\in\R^d_\rmt$.

\begin{remark}
With these specific choices for $\Ra$, $\Ma$ and $\tau$, we obtain a
direct categorification of the structure of an ordinary weak
projective representation from Definition~\ref{def: weak projective
  rep}. In accordance with the general paradigm of Section~\ref{Sec: Bundle gerbes}, we first replace the ground field $(\C, +, \cdot)$ by the rig category $(\Hilb, \oplus, \otimes)$.
There is a categorified $\Hilb$-algebra structure on $\Aa = (\HVBdl_\triv(M), \otimes, \oplus)$, where the action of $\Hilb$ on $\Aa$ is via mapping Hilbert spaces to trivial vector bundles, as spelled out in Section~\ref{sect:2Hspace of sections of triv BGrbs}.
Finally, $\Ma = (\Gamma(M, \Ia_\rho), \oplus)$ is a module category over $\Aa$.
\end{remark}

In the ensuing concrete calculations we frequently make use of the
fact that the Lie derivative $\mathcal{L}$ and integration are compatible in the following sense:
for every oriented manifold $M$ one has
\begin{align}
	\frac{\dd}{\dd t} \bigg(\int_{\Phi_{\hat v}(t) (V)}\, \eta\bigg)|_{t=0} = \int_{V}\,
  \mathcal{L}_{\hat v} \eta \ ,
\end{align}
where $V$ is an $m$-dimensional submanifold of $M$, $\eta$ is an
$m$-form on $M$, and $\hat v$ is a vector field on $M$ with flow
$\Phi_{\hat v}(t)$ for $t\in [0,1]$.

\subsection{Higher weak 2-cocycle of a magnetic field}
\label{Sec: Def. of the cocycle}

Before introducing magnetic translation operators we define the higher weak
2-cocycle $(\chi,\omega,\tau)$ with values in the symmetric monoidal ($\Hilb$-algebra) category $\mathcal{R} = \HVBdl_\triv (M)$ (cf. Definition~\ref{def:higher weak 2-cocycle}):

\begin{myitemize}
	\item First, we define the action of the translation group $\R^d_\rmt$ on $\HVBdl_\triv(M)$ to be the pullback
	\begin{equation}
	\tau(v) = \tau_{-v}^* \colon \HVBdl_\triv (M) \longrightarrow \HVBdl_\triv (M)\ ,\quad
		\eta \mapsto \tau_{-v}^* \eta \ ,
	\end{equation}
for all $ v \in \R^d_\rmt$.
	
	\item For translation vectors $v,w \in \R^d_\rmt$, the 1-form $\chi_{v,w} \in \Omega^1(M)$ represents the trivial line bundle over $M$ with connection 1-form
	\begin{align}
		\chi_{v,w}|_x(a) =  \frac1\hbar\, \int_{\triangle^2(x;w,v)}\, \iota_{\hat{a}} H \ ,
	\end{align}
for all $a \in T_xM$, where $\iota_{\hat a}$ denotes contraction with
$\hat{a} \in \Gamma(M,TM)$ which is the unique extension of $a$ to a constant vector field on $M$, and $H = \dd \rho$ is the curvature of the bundle gerbe $\Ia_\rho$.
	
	\item Given a triple $u,v,w \in \R^d_\rmt$ of translation vectors we define an isomorphism in $\HLBdl^\nabla_\triv(M)$ via
\begin{equation}
	\omega_{u,v,w} \colon \chi_{u+v,w} \otimes \chi_{u,v} \longrightarrow \chi_{u,v+w}\otimes \tau(u)[\chi_{v,w}]\ ,
	\quad
	\omega_{u,v,w}(x) \coloneqq \exp\bigg(\frac\iu\hbar\,
        \int_{\triangle^3(x;w,v,u)} H \bigg) \ ,
\end{equation}
for all $x \in M$.
\end{myitemize}
We check that this defines a parallel morphism of hermitean vector bundles with connection, i.e. that \eqref{Condition morphisms} is satisfied:
setting $f = \omega_{u,v,w}$, we compute
\begin{equation}
\begin{aligned}
	f^{-1} \, \dd f|_x(a)
	&= \dd \log\exp\bigg(\frac\iu\hbar\, \int_{\triangle^3(-;w,v,u)}\, H\bigg) |_{x}(a)\\[4pt]
	&= \frac\iu\hbar\, \mathcal{L}_{\hat{a}} \bigg( \int_{\triangle^3(-;w,v,u)}\, H\bigg)|_x\\[4pt]
	&=  \frac\iu\hbar\, \int_{\triangle^3(x;w,v,u)} \, \mathcal{L}_{\hat{a}} H\ .
\end{aligned}
\end{equation}
On the other hand, setting $\eta = \chi_{u+v,w} + \chi_{u,v}$ and $\eta' = \chi_{u,v+w} + \tau(u)[\chi_{v,w}]$ in~\eqref{Condition morphisms}, we have 
\begin{align}
	\iu\, (\eta - \eta')|_x(a)
	&= - \frac\iu\hbar\, \int_{\triangle^2(x;v+w,u)}\, \iota_{\hat{a}} H
	- \frac\iu\hbar\, \int_{\triangle^2(x-u,w,v)}\,
          \iota_{\hat{a}} H \\ & \qquad
	+\, \frac\iu\hbar \int_{\triangle^2(x;w,u+v)}\, \iota_{\hat{a}} H
	+ \frac\iu\hbar\, \int_{\triangle^2(x;v,u)}\, \iota_{\hat{a}} H
	\\[4pt]
	&=  \frac\iu\hbar\, \int_{\partial \triangle^3(x;w,v,u)}\,
          \iota_{\hat{a}} H
        \\[4pt]
	&= \frac\iu\hbar\, \int_{\triangle^3(x;w,v,u)}\, \dd\, \iota_{\hat{a}} H
	\\[4pt]
	&=  \frac\iu\hbar\, \int_{\triangle^3(x;w,v,u)}\, \mathcal{L}_{\hat{a}} H\ ,
\end{align}
where we have used \eqref{Eq: Boundary delta3} together with Stokes' Theorem, the Cartan formula for the Lie
derivative $\mathcal{L} = \dd\circ\iota+\iota\circ\dd$, and that $\dd H=0$.
The higher weak 2-cocycle condition~\eqref{Eq: 3-cocycle condition} with respect to the action $\tau$ of $\R^d_\rmt$ on $\HVBdl_\triv(M)$ is satisfied since the curvature 3-form $H$ is closed.

Similarly to the discussion of Section \ref{Sec: Case without
  sources}, we can trivialise the higher weak 2-cocycle
$(\chi,\omega,\tau)$. For this, we denote by $(\chi^0, \omega^0,\tau)$ the trivial higher weak 2-cocycle
twisted by $\tau$ from the above construction, i.e. $\chi^0_{v,w}$ is
the trivial line bundle with trivial connection and $\omega^0_{u,v,w}$
is the identity for all $u,v,w \in \R^d_\rmt$.

\begin{proposition}\label{Prop: Trivialisation of the cocycle}

There is a higher weak 2-coboundary $(\lambda,\Lambda) \colon (\chi, \omega,\tau) \longrightarrow (\chi^0, \omega^0,\tau)$ given by:
\begin{myitemize}
	\item For each $v \in \R^d_\rmt$, the 1-form $\lambda_v$ represents the topologically trivial line bundle with connection 1-form given by
	\begin{equation}
		\lambda_v|_x (a) \coloneqq \frac1\hbar\, \int_{\triangle^1(x;v)}\, \iota_{\hat{a}}\rho\ ,
	\end{equation}
for all $x\in M$ and $a\in T_xM$ ;
	
	\item Given any two translation vectors $v,w \in \R^d_\rmt$, we define an isomorphism of hermitean line bundles
	\begin{equation}
		\Lambda_{v,w} \colon \chi_{v,w} \otimes \lambda_v \otimes \tau_{-v}^*\lambda_w \longrightarrow \lambda_{v+w}\otimes \chi_{v,w}^0 \ ,
		\quad
		\Lambda_{v,w} (x) \coloneqq \exp \bigg(
                \frac\iu\hbar\, \int_{\triangle^2(x;w,v)}\, \rho \bigg)
		\ ,
	\end{equation}
for all $x \in M$.
\end{myitemize}
\end{proposition}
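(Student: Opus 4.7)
The plan is to verify in turn that (i)~each $\Lambda_{v,w}$ is a parallel unitary morphism in $\HLBdl^\nabla_\triv(M)$ between the source $\chi_{v,w}\otimes\lambda_v\otimes\tau_{-v}^*\lambda_w$ and target $\lambda_{v+w}\otimes\chi^0_{v,w}$, (ii)~the normalisation conditions of Definition~\ref{Coboundary} hold, and (iii)~the coherence hexagon \eqref{EQ: Weak coboundary} commutes. All three checks reduce to Stokes-type manipulations of simplex integrals, in complete analogy with the verification carried out at the start of this section for the higher weak 2-cocycle $(\chi,\omega,\tau)$.

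For~(i), $\Lambda_{v,w}$ is manifestly $\sfU(1)$-valued, so only the parallel condition~\eqref{Condition morphisms} requires work. With source $\eta = \chi_{v,w}+\lambda_v+\tau_{-v}^*\lambda_w$ and target $\eta' = \lambda_{v+w}$ (using $\chi^0_{v,w}=0$), the condition reads $\Lambda_{v,w}^{-1}\,\dd\Lambda_{v,w} = \iu(\eta-\eta')$. The left-hand side equals $\frac{\iu}{\hbar}\int_{\triangle^2(x;w,v)}\mathcal{L}_{\hat a}\rho$ by the Lie derivative/integration compatibility recalled just before Section~\ref{Sec: Def. of the cocycle}. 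For the right-hand side, I would collect the three $\iota_{\hat a}\rho$ integrals over the 1-simplices $\triangle^1(x;v)$, $\triangle^1(x;v+w)$ and $\triangle^1(x-v;w)$, which by Stokes' theorem on $\triangle^2(x;w,v)$ sum to $\frac{\iu}{\hbar}\int_{\triangle^2(x;w,v)}\dd\iota_{\hat a}\rho$; adding this to the remaining $\iota_{\hat a} H$ term coming from $\chi_{v,w}$ and invoking Cartan's formula together with $H=\dd\rho$ produces exactly $\frac{\iu}{\hbar}\int_{\triangle^2(x;w,v)}\mathcal{L}_{\hat a}\rho$. Part~(ii) is immediate from the degeneracy of $\triangle^1(x;0)$, $\triangle^2(x;v,0)$ and $\triangle^2(x;0,v)$.

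The coherence condition~(iii), with $g=u$, $h=v$, $k=w$, reduces to equality of two $\sfU(1)$-valued functions, since $\chi^0$ is trivial and $\omega^0$ is the identity. After expanding $\tau(u)[\Lambda_{v,w}](x) = \Lambda_{v,w}(x-u)$, the required identity is
\begin{equation*}
    \int_{\triangle^3(x;w,v,u)} H \;=\; \int_{\triangle^2(x;v,u)}\rho + \int_{\triangle^2(x;w,v+u)}\rho - \int_{\triangle^2(x;w+v,u)}\rho - \int_{\triangle^2(x-u;w,v)}\rho ,
\end{equation*}
which is exactly Stokes' theorem $\int_{\triangle^3}\dd\rho=\int_{\partial\triangle^3}\rho$ combined with the boundary decomposition~\eqref{Eq: Boundary delta3} under the substitution $(e_1,e_2,e_3)\mapsto(w,v,u)$.

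The main obstacle is purely bookkeeping: transcribing each of the six arrows of the hexagon~\eqref{EQ: Weak coboundary} into its explicit simplex integral and matching the signs against the orientations dictated by~\eqref{Eq: Boundary delta3}. Once this transcription is made correctly, no closedness of $\rho$ plays any role; both (i) and (iii) follow from Stokes' theorem applied to simplices of dimension $2$ and $3$, respectively.
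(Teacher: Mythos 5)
Your proposal is correct and follows essentially the same route as the paper's proof: parallelity of $\Lambda_{v,w}$ is verified by combining the Lie-derivative/integration compatibility with Stokes' theorem on $\triangle^2(x;w,v)$ and the Cartan formula, and the coherence condition reduces to Stokes' theorem on $\triangle^3(x;w,v,u)$ via the boundary decomposition \eqref{Eq: Boundary delta3}. The key identity you isolate for the hexagon is exactly the one the paper checks, so only the routine bookkeeping you already describe remains.
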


\begin{proof}
We check that $\Lambda_{v,w}$ is parallel for all $x\in M$ and $a\in T_xM$:
\begin{align}
	\dd \bigg( \frac\iu\hbar\, \int_{\triangle^2(-;w,v)}\, \rho \bigg)|_x(a)
	&=\frac\iu\hbar\,\mathcal{L}_{\hat{a}} \bigg( 
          \int_{\triangle^2(-;w,v)}\, \rho \bigg)|_x
	\\[4pt]
	&= \frac\iu\hbar\, \int_{\triangle^2(x;w,v)}\, \mathcal{L}_{\hat{a}} \rho
	\\[4pt]
	&= \frac\iu\hbar\, \bigg( \int_{\triangle^2(x;w,v)}\,
          \iota_{\hat{a}} H + \int_{\partial \triangle^2(x;w,v)}\, \iota_{\hat{a}} \rho \bigg)
	\\[4pt]
	&= \iu\, \big(\chi_{v,w} + \tau_{-w}^*\lambda_v + \lambda_w -
          \lambda_{v+w} \big)|_x(a)\ ,
\end{align}
where we have used Stokes' Theorem and the Cartan formula, as well as
the fact that for 1-forms $A, A' \in \Omega^1(M)$ there is a canonical isomorphism of hermitean line bundles $E_A \otimes E_{A'} \cong E_{A + A'}$ (in the notational conventions of Definition~\ref{def:BGrb_triv}).
Finally we check that with these choices of $\lambda$ and $\Lambda$,
the diagram~\eqref{EQ: Weak coboundary} commutes; we compute
\begin{equation}
\begin{aligned}
	& \exp \bigg(  \frac\iu\hbar\, \int_{\triangle^2(x;v,u)}\, \rho
  + \frac\iu\hbar\, \int_{\triangle^2(x;w,u+v)}\, \rho \\ & \qquad
  \qquad \qquad -\,
  \frac\iu\hbar\, \int_{\triangle^2(x;v+w,u)}\, \rho - \frac\iu\hbar\,
  \int_{\triangle^2(x-u,w,v)}\, \rho - \frac\iu\hbar\, \int_{\triangle^3(x;w,v,u)}\, H \bigg) = 1
\end{aligned}
\end{equation}
for all $x \in M$, where we used~\eqref{Eq: Boundary delta3} and Stokes' Theorem. 
\end{proof}

\subsection{Weak projective 2-representation of magnetic translations}

To construct a weak projective 2-representation of the translation group $\R^d_\rmt$ we define a parallel transport functor on sections of the bundle gerbe $\mathcal{I}_\rho$.
For arbitrary translation vectors $v \in \R^d_\rmt$, connection 1-forms $\eta \in \Omega^1(M, \frh(n))$ and functions $f \in C^\infty(M, \Mat(k \times l))$ for $n,k,l \in \N_0$ it reads as
\begin{equation}
\label{eq:definition of PT of I_rho}
\begin{split}
	\Pa_v \colon \Gamma(M,\mathcal{I}_\rho) &\longrightarrow \Gamma(M,\mathcal{I}_\rho)\ ,
	\\
	\eta &\longmapsto \Pa_v(\eta)
	\qquad \text{with} \quad
	\Pa_v(\eta)|_x \coloneqq \eta|_{x-v} + \frac1\hbar\, \int_0^1\, \rho|_{x - (1-t)\,v}(v, -)\ \dd t \cdot \mathds{1}_n\ ,
	\\
	f & \mapsto \Pa_v(f)
	\qquad \text{with} \quad
	\Pa_v(f)(x) = f(x-v) \ ,
\end{split}
\end{equation}
for all $x\in M$.
We can rewrite the term in $\Pa_v(\eta)$ that contains $\rho$ as
\begin{align}
	\bigg( \int_0^1\, \rho|_{x - (1-t)\,v}(v, -)\ \dd t \cdot \mathds{1}_n \bigg) (a)
	= -\int_{\triangle^1(x;v)}\, \iota_{\hat{a}} \rho \cdot \mathds{1}_n
	\ ,
\end{align}
for all $a\in T_xM$.
We observe that $\Pa_v \colon \Gamma(M, \Ia_\rho) \to \Gamma(M, \Ia_\rho)$ is a twisted $\HVBdl_\triv(M)$-module functor:
for any $\xi \in \Omega^1(M, \frh(k))$, regarded as a connection on a trivial hermitean vector bundle on $M=\R^d$, and $\eta \in \Gamma(M, \Ia_\rho)$ we have
\begin{equation}
	\Pa_v( \xi \otimes \eta) = \tau_{-v}^*\xi \otimes \Pa_v(\eta)
	= \tau(v)[\xi] \otimes \Pa_v(\eta)
\end{equation}
in the notation of Definition~\ref{def:BGrb_triv}, and accordingly for morphisms $f$.

To make this into a weak projective 2-representation we define natural isomorphisms
\begin{equation}
\label{eq:def Pi_{v,w}}
\begin{split}
	\Pi_{v,w} &\colon \Pa_v \circ \Pa_w \Longrightarrow \ell_{\chi_{v,w}} \circ \Pa_{v+w} \ ,
	\\
	\Pi_{v,w|\eta} &\colon  \Pa_v \circ \Pa_w (\eta)
        \longrightarrow \frac1\hbar\, \int_{\triangle^2(-;w,v)}\,  \iota_- H \cdot \mathds{1}_n + \Pa_{v+w}(\eta) \ ,
	\\
	\Pi_{v,w|\eta}(x) &:= \exp\bigg(-\frac\iu\hbar\, 
        \int_{\triangle^2(x;w,v)}\, \rho \bigg) \cdot  \mathds{1}_n
\end{split}
\end{equation}
for all translation vectors $v,w \in \R^d_\rmt$ and points $x \in M$.

\begin{lemma}
For any $v,w \in \R^d_\rmt$, the isomorphism $\Pi_{v,w|\eta}$ is parallel and natural in $\eta$.
\end{lemma}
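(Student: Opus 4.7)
The statement has two parts and I would take them in order of difficulty. Naturality in $\eta$ is essentially immediate: the scalar $\exp\bigl(-\tfrac{\iu}{\hbar}\int_{\triangle^2(x;w,v)}\rho\bigr)$ defining $\Pi_{v,w|\eta}(x)$ does not depend on $\eta$, and both $\Pa_v\circ\Pa_w$ and $\ell_{\chi_{v,w}}\circ\Pa_{v+w}$ send a 2-morphism $\phi\colon\eta\Rightarrow\eta'$ simply to the translated function $x\mapsto \phi(x-v-w)$, with no conjugation by any matrix-valued quantity. The naturality square therefore reduces to the observation that a scalar multiple of an identity matrix commutes with any rectangular matrix of the appropriate size, which I would record in two lines.

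For parallelism, I would unwind the condition \eqref{Condition morphisms} for $\Pi = \Pi_{v,w|\eta}$. Since $\Pi$ is a scalar multiple of $\mathds{1}_n$, the condition collapses to $\Pi^{-1}\,\dd\Pi = \iu\,(\eta_{\mathrm{src}} - \eta_{\mathrm{tgt}})$, where $\eta_{\mathrm{src}} = \Pa_v\circ\Pa_w(\eta)$ and $\eta_{\mathrm{tgt}} = \chi_{v,w}\cdot\mathds{1}_n + \Pa_{v+w}(\eta)$. For the left-hand side I would use the compatibility of Lie derivative with integration recalled at the start of Section~5, together with Cartan's formula $\mathcal{L}_{\hat a}=\dd\,\iota_{\hat a}+\iota_{\hat a}\,\dd$ and $\dd\rho = H$, to obtain
\[
\Pi^{-1}\,\dd\Pi|_x(a) \;=\; -\frac{\iu}{\hbar}\,\bigg(\int_{\triangle^2(x;w,v)}\iota_{\hat a} H \;+\; \int_{\partial \triangle^2(x;w,v)}\iota_{\hat a}\rho\bigg)\cdot\mathds{1}_n\ ,
\]
after applying Stokes' Theorem to the exact piece. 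For the right-hand side I would expand $\Pa_v\circ\Pa_w(\eta)$ and $\Pa_{v+w}(\eta)$ using the second form of the $\rho$-term in \eqref{eq:definition of PT of I_rho}, and substitute the definition of $\chi_{v,w}$. The terms $\eta|_{x-v-w}$ cancel, leaving a bulk integral of $\iota_{\hat a} H$ over $\triangle^2(x;w,v)$ and three boundary integrals of $\iota_{\hat a}\rho$ over $\triangle^1(x;v)$, $\triangle^1(x-v;w)$ and $\triangle^1(x;v+w)$.

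The one non-trivial ingredient, which I expect to be the main (but still elementary) obstacle, is the oriented boundary decomposition
\[
\partial \triangle^2(x;w,v) \;=\; \triangle^1(x;v)\,\cup\,\triangle^1(x-v;w)\,\cup\,\overline{\triangle^1(x;v+w)}\ ,
\]
in complete analogy with \eqref{Eq: Boundary delta3}; checking the orientations is routine but requires care with the parametrisation $\delta^2(x;w,v)$. Once this is established, the three boundary 1-simplex integrals assemble into $-\int_{\partial \triangle^2(x;w,v)}\iota_{\hat a}\rho$ with the correct sign, the bulk $H$-terms match automatically, and the two sides of the parallelism condition agree. Hence $\Pi_{v,w|\eta}$ is parallel, completing the proof.
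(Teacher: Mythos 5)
Your proposal is correct and follows essentially the same route as the paper's proof: both verify parallelism by comparing $\Pi^{-1}\,\dd\Pi$ with the difference of source and target connection $1$-forms using the Lie-derivative/integration compatibility, Cartan's formula, Stokes' Theorem and the oriented boundary decomposition of $\triangle^2(x;w,v)$ (which you make explicit where the paper leaves it implicit), and both dispose of naturality by noting that $\Pi_{v,w|\eta}$ is independent of $\eta$ and a central scalar multiple of the identity.
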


\begin{proof}
In the notation of \eqref{Condition morphisms} we now have to consider $f = \Pi_{v,w|\eta}$.
We calculate 
\begin{align}
	f^{-1}\, \dd f|_x(a) = \dd \log(\Pi_{v,w|\eta}) |_{x}(a)
	= - \frac\iu\hbar\, \mathcal{L}_{\hat{a}} \bigg( \int_{\triangle^2(-;w,v)}\, \rho\bigg)|_x \cdot  \mathds{1}_n = -\frac\iu\hbar\,  \int_{\triangle^2(x;w,v)}\, \mathcal{L}_{\hat{a}} \rho \cdot  \mathds{1}_n\ .
\end{align}
On the other hand, for the difference $\iu\,(\xi' - \xi)$ in~\eqref{Condition morphisms}, with $\xi$ and $\xi'$ chosen as indicated in~\eqref{eq:def Pi_{v,w}}, we obtain
\begin{align}
	\iu\, (\xi' - \xi)|_x(a) &= \iu\, \bigg(\Pa_{v+w}(\eta) + \frac1\hbar\, \int_{\triangle^2(-;w,v)}\, \iota_- H \cdot \mathds{1}_n - \Pa_v \circ \Pa_w (\eta)\bigg)|_x(a)
	\\[4pt]
	&= \frac\iu\hbar\, \int_{\partial \triangle^2(x;w,v)}\,  \iota_{\hat{a}} \rho \cdot \mathds{1}_n
	+ \frac\iu\hbar\, \int_{\triangle^2(x;w,v)}\, \iota_{\hat{a}} H \cdot \mathds{1}_n
	\\[4pt]
	& = \frac\iu\hbar\, \int_{ \triangle^2(x;w,v)}\, \dd\, \iota_{\hat{a}} \rho \cdot  \mathds{1}_n
	+ \frac\iu\hbar\, \int_{\triangle^2(x;w,v)}\, \iota_{\hat{a}} H \cdot \mathds{1}_n
	\\[4pt]
	&=  \frac\iu\hbar\, \int_{\triangle^2(x;w,v)}\, \mathcal{L}_{\hat{a}} \rho \cdot \mathds{1}_n \ ,
\end{align}
using Stokes' Theorem, the Cartan formula and $H=\dd\rho$.
The naturality follows from the fact that $\Pi_{v,w|\eta}$ is independent of $\eta$ and central in the algebra of matrix-valued functions on $M$.  
\end{proof}

\begin{theorem}
The pair $(\Pa, \Pi)$ defined in~\eqref{eq:definition of PT of I_rho} and~\eqref{eq:def Pi_{v,w}} forms a weak projective 2-representation of the translation group $\R^d_\rmt$ on the $\HVBdl_\triv(M)$-module category $\Gamma(M,\Ia_\rho)$ twisted by the higher weak 2-cocycle $(\chi,\omega,\tau)$ defined in Section~\ref{Sec: Def. of the cocycle}.
\end{theorem}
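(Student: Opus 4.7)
The plan is to verify the two conditions of a weak projective 2-representation that are not yet established by the preceding discussion and lemma. The twisted $\HVBdl_\triv(M)$-module structure of each functor $\Pa_v$ is spelled out just after~\eqref{eq:definition of PT of I_rho}, and the preceding lemma ensures that $\Pi_{v,w|\eta}$ is parallel and natural in $\eta$. Normalisation is immediate: $\triangle^1(x;0)$ degenerates to the single point $x$, so the integral term in~\eqref{eq:definition of PT of I_rho} vanishes and $\Pa_0$ acts as the identity on objects and on morphisms; likewise $\triangle^2(x;0,v)$ and $\triangle^2(x;v,0)$ are degenerate 2-simplices of zero 2-dimensional volume, whence $\Pi_{0,v|\eta}(x) = \Pi_{v,0|\eta}(x) = \mathds{1}_n$. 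The core of the theorem is therefore the commutativity of the hexagon~\eqref{EQ: Diagram definition weak Projective}.

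Fix $u,v,w \in \R^d_\rmt$, an object $\eta \in \Omega^1(M,\frh(n))$, and a point $x \in M$. Every edge of~\eqref{EQ: Diagram definition weak Projective} is multiplication by a $\sfU(1)$-valued function times $\mathds{1}_n$, so checking commutativity amounts to a scalar identity. The unlabelled twisted-module coherence $\Pa_u \circ \ell_{\chi_{v,w}} \Rightarrow \ell_{\tau(u)[\chi_{v,w}]} \circ \Pa_u$ reduces to the identity in our trivialised model, since evaluating $\Pa_u(\ell_{\chi_{v,w}}(\eta))$ and $\ell_{\tau_{-u}^*\chi_{v,w}}(\Pa_u(\eta))$ both yield the connection $\tau_{-u}^*\chi_{v,w}\cdot\mathds{1}_n + \Pa_u(\eta)$; moreover $\Pa_u$ maps a 2-morphism $f$ to $f\circ\tau_{-u}$ by~\eqref{eq:definition of PT of I_rho}. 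Reading off the phases from~\eqref{eq:def Pi_{v,w}} and from the definition of $\omega_{u,v,w}$ in Section~\ref{Sec: Def. of the cocycle}, the lower path of the hexagon evaluates at $x$ to
\begin{equation}
\exp\bigg(-\frac{\iu}{\hbar}\, \int_{\triangle^2(x;v,u)}\, \rho - \frac{\iu}{\hbar}\, \int_{\triangle^2(x;w,u+v)}\, \rho + \frac{\iu}{\hbar}\, \int_{\triangle^3(x;w,v,u)}\, H \bigg)\, \mathds{1}_n
\end{equation}
while the upper path evaluates to
\begin{equation}
\exp\bigg(-\frac{\iu}{\hbar}\, \int_{\triangle^2(x-u;w,v)}\, \rho - \frac{\iu}{\hbar}\, \int_{\triangle^2(x;v+w,u)}\, \rho \bigg)\, \mathds{1}_n\ .
\end{equation}

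Equality of these two scalars is precisely
\begin{equation}
\int_{\triangle^3(x;w,v,u)}\, H = \int_{\triangle^2(x;v,u)}\, \rho + \int_{\triangle^2(x;w,u+v)}\, \rho - \int_{\triangle^2(x-u;w,v)}\, \rho - \int_{\triangle^2(x;v+w,u)}\, \rho\ ,
\end{equation}
which follows immediately from $H = \dd\rho$, Stokes' Theorem, and the boundary decomposition~\eqref{Eq: Boundary delta3} applied to $\triangle^3(x;w,v,u)$ after substituting $w,v,u$ for $e_1,e_2,e_3$. I expect the only delicate step in writing up the argument to be matching the four signed faces in~\eqref{Eq: Boundary delta3} correctly with the four $\rho$-integrals on the right-hand side above, but once the orientations are tracked the identity is forced. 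This Stokes calculation closely parallels the one in Section~\ref{Sec: Def. of the cocycle} used to verify the higher weak 2-cocycle condition for $(\chi,\omega,\tau)$, reflecting the fact that the obstruction to $(\Pa,\Pi)$ being a strict projective 2-representation is controlled by exactly the same closed 3-form $H$ that defines $\omega$.
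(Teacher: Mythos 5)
Your proposal is correct and follows essentially the same route as the paper's own proof: reduce the hexagon~\eqref{EQ: Diagram definition weak Projective} to a scalar identity among $\sfU(1)$-valued phases, then verify it via Stokes' Theorem and the boundary decomposition~\eqref{Eq: Boundary delta3} of $\triangle^3(x;w,v,u)$, with all signs matching the paper's computation. The additional remarks on normalisation and on the triviality of the twisted-module coherence in this model are correct and merely make explicit what the paper states in passing.
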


\begin{proof}
We have to verify that the diagram~\eqref{EQ: Diagram definition weak Projective} commutes, which amounts to commutativity of the diagram
\begin{equation}
\begin{tikzcd}[column sep=2.25cm, row sep=1.25cm]
	\ell_{\chi_{u,v}} \circ \Pa_{u+v} \circ \Pa_w \ar[r,Rightarrow,"\Pi_{u+v,w}"]& \ell_{\chi_{u,v}} \circ \ell_{\chi_{u+v,w}} \circ \Pa_{u+v+w} \ar[dr,Rightarrow,"\ell_{\omega_{u,v,w}}"] &
	\\
	\Pa_u \circ \Pa_v \circ \Pa_w \ar[d,Rightarrow,"{\Pa_u(\Pi_{v,w})}"']  \ar[u,Rightarrow, "\Pi_{u,v}"]  & &  \ell_{\tau(u)[\chi_{v,w}]} \circ \ell_{\chi_{u,v+w}} \circ \Pa_{u+v+w}
	\\
	\Pa_u \circ \ell_{\chi_{v,w}} \circ \Pa_{v+w} \ar[r,Rightarrow,"="'] & \ell_{\tau(u)[\chi_{v,w}]} \circ \Pa_u \circ \Pa_{v+w} \ar[ru,Rightarrow, "\Pi_{u,v+w}", swap] &
\end{tikzcd}
\end{equation}
The functors $\ell_{\chi}$ do not change the functions underlying the isomorphisms $\Pi$, while $\Pa$ acts on these functions only by a translation.
At the level of the underlying functions we thus calculate
\begin{align}
	\Pi_{u+v,w} \circ \Pi_{u,v} \, \circ \, & \, \Pa_u (\Pi_{v,w})^{-1} \circ \Pi_{u,v+w}^{-1}(x)
	\\
	&= \exp \bigg(\frac\iu\hbar\, \int_{\triangle^2(x;v+w,u)}\, \rho + \frac\iu\hbar\, \int_{\triangle^2(x-u;w,v)}\, \rho
	- \frac\iu\hbar\, \int_{\triangle^2(x;u,v)}\, \rho - \frac\iu\hbar\, \int_{\triangle^2(x;u+v,w)}\, \rho \bigg)
	\\[4pt]
	&= \exp \bigg( -\frac\iu\hbar\, \int_{\partial \triangle^3(x;w,v,u)}\, \rho \bigg)
	\\[4pt]
	&=  \exp \bigg( -\frac\iu\hbar\, \int_{ \triangle^3(x;w,v,u)}\, H \bigg)
	\\[4pt]
	&= \omega_{u,v,w}^{-1}(x) \ ,
\end{align}
for all $x\in M$.
Here we have once again made use of the decomposition~\eqref{Eq: Boundary delta3} of the oriented boundary of the 3-simplex.
\end{proof}

\begin{remark}\label{Remark: non-associativity}
There are two different ways to go from the composition of three translation operators to a single translation operator.
Their difference is controlled by $\omega$ as depicted in the diagram
\begin{equation}
\begin{tikzcd}[row sep=1.25cm]
	\Pa_u \circ (\Pa_v \circ \Pa_w) \ar[r,Rightarrow] & \Pa_u \circ (\ell_{\chi_{v,w}} \circ \Pa_{v+w}) = \ell_{\tau_{-u}^* \chi_{u,v}} \circ \Pa_u \circ \Pa_{v+w} \ar[r, Rightarrow] &  \ell_{\tau_{-u}^* \chi_{v,w}} \circ \ell_{\chi_{u,v+w}}  \circ \Pa_{u+v+w} \ar[d,Rightarrow, "\ell_{\omega_{u,v,w}^{-1}}"]
	\\
	(\Pa_u \circ \Pa_v) \circ \Pa_w \ar[r,Rightarrow] & \ell_{\chi_{u,v}} \circ \Pa_{u+v} \circ \Pa_w \ar[r, Rightarrow] & \ell_{\chi_{u,v}} \circ \ell_{\chi_{u+v,w}} \circ \Pa_{u+v+w}
\end{tikzcd}
\end{equation}
This is the implementation of nonassociativity in the higher categorical framework.  
\end{remark}

\begin{remark}
Using the expressions in~\eqref{eq:pushforward of wp2reps} to push forward the weak projective 2-representation along the higher weak 2-coboundary defined in Proposition~\ref{Prop: Trivialisation of the cocycle} yields the honest 2-representation of $\R^d_\rmt$ on $\Gamma(M,\Ia_\rho)$ by bare pullbacks, which is completely associative.
This should be understood as a higher categorical analogue of switching between kinematical and canonical momentum operators as discussed in Section~\ref{Sec: Case without sources}. 
\end{remark}

\begin{remark}
The weak projective 2-representation $(\Pa, \Pi)$ together with the higher weak 2-cocycle $(\chi, \omega, \tau)$ give an independent derivation of the nonassociativity of magnetic translations in generic backgrounds of magnetic charge, as calculated originally by~\cite{Jackiw:1984rd}.
In this latter approach the $C^\infty(\R^3,\sfU(1))$-valued 3-cocycle $\omega$ on $\R_\rmt^3$ that appears as part of the higher weak 2-cocycle $(\chi, \omega, \tau)$ was derived by purely algebraic means from the commutation and association relations~\eqref{eq:can comm with B-field} and~\eqref{eq:3-can associator with B-field}, with no description of the quantities that nonassociative magnetic translations act on.
Here, in contrast, we have arrived at the 3-cocycle by first answering that question -- it is the category of sections of a bundle gerbe $\Ia_\rho$ -- and then representing magnetic translations by means of parallel transport on $\Ia_\rho$.
Thus the parallel transport $\Pa$ on the bundle gerbe $\Ia_\rho$ relates to the 3-cocycle $\omega$ in complete analogy to how the parallel transport $P$ on the line bundle $L$ in Section~\ref{Sec: Case without sources} is related to the 2-cocycle derived in e.g.~\cite{Hannabuss:2017ion,Soloviev:Dirac_Monopole_and_Kontsevich}.
This suggests an interpretation of sections of bundle gerbes as a generalised model for the quantum state space of a charged particle in generic distributions of magnetic charge.
\end{remark}

\subsection{Examples}

Let us now look at two particular examples.
The first example is a consistency check to some extent -- we investigate the case where the magnetic charge distribution $H=\dd\rho$ vanishes identically.
In the second example we consider a constant distribution $H=\tilde H$ of magnetic charge.
This latter example cannot be treated in the gauge theory formalism of classical electromagnetism;
while a finite collection of Dirac monopoles may be treated by removing their locations to avoid singularities, there is no line bundle that can realise a non-vanishing smooth distribution of magnetic charge.

\subsubsection*{No magnetic charge}

Let us first assume that $H = \dd \rho = 0$.
Then we can find a suitable 1-form $A \in \Omega^1(M)$ such that $\rho = \dd A$.
The condition $H = 0$ implies that the higher weak 2-cocycle $(\chi,\omega, \tau)$ is the trivial higher weak 2-cocycle twisted by $\tau$.
That is, $(\chi,\omega, \tau) = (\chi^0,\omega^0, \tau)$, where the action $\tau$ of the translation group on $\mathcal{R} = \HVBdl_\triv(M)$ is still non-trivial. 
The nonassociativity discussed in Remark~\ref{Remark: non-associativity} is absent, however, since $\omega$ is trivial.
Nevertheless, there still remains the non-trivial action of the translation group and a non-trivial 2-cocycle described by $\Pi$, which is part of the weak projective 2-representation $(\Pa, \Pi)$.
This is precisely the 2-cocycle introduced in~\eqref{EQ: Definition 2 cocycle} and derived in e.g.~\cite{Hannabuss:2017ion,Soloviev:Dirac_Monopole_and_Kontsevich}, which governs the weak projective representation in the classical case. Then~\eqref{EQ: Diagram definition weak Projective} reduces to the 2-cocycle condition.
We thus obtain

\begin{proposition}
If $\Ia_\rho$ is a flat bundle gerbe on $M$ and $A \in \Omega^1(M)$ satisfies $\dd A = \rho$, then the data of the weak projective 2-representation $(\Pa, \Pi)$ reduce to yield a 2-cocycle $\Pi \in \sfC^2(\R^d_\rmt, C^\infty(M, \sfU(1)))$.
This 2-cocycle agrees with the 2-cocycle~\eqref{EQ: Definition 2 cocycle} that the parallel transport on line bundles produces when starting from the hermitean line bundle $E_A$ with connection on $M$.
\end{proposition}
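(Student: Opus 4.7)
The plan is to trace how the higher weak 2-cocycle $(\chi, \omega, \tau)$ degenerates when $H = \dd \rho = 0$, and then verify that the surviving datum $\Pi$ coincides on the nose with the line-bundle 2-cocycle $\omega$ of~\eqref{EQ: Definition 2 cocycle}.

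First I would observe that the defining integrals in Section~\ref{Sec: Def. of the cocycle}, namely $\chi_{v,w}|_x(a) = \frac{1}{\hbar}\int_{\triangle^2(x;w,v)} \iota_{\hat a} H$ and $\omega_{u,v,w}(x) = \exp\bigl(\frac{\iu}{\hbar}\int_{\triangle^3(x;w,v,u)} H\bigr)$, both collapse to triviality for $H=0$, so that $(\chi, \omega, \tau) = (\chi^0, \omega^0, \tau)$ with only the $\R^d_\rmt$-action $\tau(v) = \tau_{-v}^*$ surviving non-trivially. The components $\Pi_{v,w|\eta}(x) = \exp\bigl(-\frac{\iu}{\hbar}\int_{\triangle^2(x;w,v)}\rho\bigr) \cdot \One_n$ are $\eta$-independent, central in the endomorphism algebra of $\Pa_{v+w}(\eta)$, and $\sfU(1)$-valued. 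Using the canonical identification of $\ell_{\chi^0_{v,w}}$ with the identity functor, $\Pi$ therefore descends to a well-defined map $\Pi \colon \R^d_\rmt \times \R^d_\rmt \to C^\infty(M, \sfU(1))$.

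Next, in the coherence diagram~\eqref{EQ: Diagram definition weak Projective}, the 2-morphism $\ell_{\omega_{u,v,w}}$ reduces to the identity, and one is left with the strict equation $\tau_u(\Pi_{v,w}) \cdot \Pi_{u+v,w}^{-1} \cdot \Pi_{u,v+w} \cdot \Pi_{u,v}^{-1} = 1$, which is precisely the 2-cocycle condition for an element of $\sfC^2(\R^d_\rmt, C^\infty(M, \sfU(1)))$ as in Section~\ref{Sec: Case without sources}. Comparing the explicit formula for $\Pi_{v,w}(x)$ against $\omega_{v,w}(x)$ from~\eqref{EQ: Definition 2 cocycle} then shows that the two expressions are literally identical, and since the latter was already identified in Section~\ref{Sec: Case without sources} as the 2-cocycle produced by parallel transport on the hermitean line bundle with connection $E_A$ satisfying $\dd A = \rho$, the two 2-cocycles agree.

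The main obstacle is not computational but book-keeping: one has to verify carefully that the reduction of the coherence hexagon of Definition~\ref{Rem: Def higher projectiv representation} to the ordinary 2-cocycle equation is clean, i.e.~that the various unlabelled coherence isomorphisms (in particular the one arising from the twisted $\HVBdl_\triv(M)$-module structure of $\Pa_u$) collapse compatibly with the trivialisations of $\chi$ and $\omega$, so that no spurious factors are produced. Once this is settled, the remaining argument is just the direct comparison of the two exponential-of-integral expressions, which was already established in the proof of the Theorem above.
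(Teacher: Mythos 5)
Your proposal is correct and follows essentially the same route as the paper, which does not give a separate proof but derives the proposition from the immediately preceding discussion: $H=0$ trivialises $\chi$ and $\omega$ (both being integrals of $H$), the surviving $\eta$-independent central datum $\Pi_{v,w}(x)=\exp\bigl(-\frac{\iu}{\hbar}\int_{\triangle^2(x;w,v)}\rho\bigr)$ is literally the expression~\eqref{EQ: Definition 2 cocycle}, and the coherence diagram~\eqref{EQ: Diagram definition weak Projective} collapses to the ordinary 2-cocycle condition. Your added care about the unlabelled coherence isomorphisms is a reasonable refinement but does not change the argument.
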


Similarly to the discussion in Section~\ref{Sec: Case without sources} we can trivialise this weak projective 2-representation.

\begin{proposition}
Let $H = \dd \rho = 0$, and let $A \in \Omega^1(M)$ satisfy $\dd A = \rho$.
Denote by $(\Pa^0 , \Pi^{0} = \id)$ the trivial weak projective 2-representation of $\R^d_\rmt$ on $\Gamma(M, \Ia_{\dd A})$ with respect to $\tau$.
There is an isomorphism of weak projective 2-representations $(\ell_A, \eta) \colon (\Pa,\Pi)\longrightarrow (\Pa^0,\Pi^0)$, with
\begin{align*}
	\ell_A \colon \Gamma(M,\mathcal{I}_{\dd A}) &\longrightarrow \Gamma(M,\mathcal{I}_{\dd A})\ ,
	\\
	\eta &\longmapsto \eta + A \cdot \mathds{1}_n\ ,
	\\
	f &\longmapsto f \ ,
\end{align*} 
and 
\begin{align*}
	m_v &\colon \ell_A \circ \Pa_v \Longrightarrow \Pa^0_v \circ \ell_A \ ,
	\\
	m_{v|\eta} &\colon \tau_{-v}^* \eta + A \cdot \mathds{1}_n + \frac1\hbar\, \int_{\triangle^1(-;v)}\, \rho \cdot \mathds{1}_n
	\longrightarrow \tau_{-v}^*(\eta + A \cdot \mathds{1}_n)\ ,
	\\
	m_{v| \eta}(x) &\coloneqq \exp \bigg( \frac\iu\hbar \, \int_{\triangle^1(x;v)}\, A \bigg) \cdot \One_n
\end{align*}
for all $\eta \in \Omega^1(M, \frh(n))$, $v \in \R^d_\rmt$, and $x \in M$.
\end{proposition}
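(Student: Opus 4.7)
The plan is to verify the four conditions required by Definition~\ref{Morphisms of weak projective representations}:
(i) $\ell_A$ is a strict $\HVBdl_\triv(M)$-module endofunctor of $\Gamma(M,\Ia_{\dd A})$;
(ii) for every $v\in\R^d_\rmt$, the family $\{m_{v|\eta}\}_\eta$ is a natural transformation $\ell_A\circ\Pa_v\Longrightarrow\Pa^0_v\circ\ell_A$ whose components are parallel morphisms in the sense of~\eqref{Condition morphisms};
(iii) the normalisation $m_0=\id$ holds;
(iv) the coherence square of the definition, with $\Pi^0_{v,w}=\id$, commutes for every pair $v,w\in\R^d_\rmt$.
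Since $H=\dd\rho=0$ forces $\chi_{v,w}=0$ and $\omega_{u,v,w}=\id$ in the higher weak 2-cocycle of Section~\ref{Sec: Def. of the cocycle}, both $(\Pa,\Pi)$ and $(\Pa^0,\Pi^0)$ are twisted by the trivial cocycle over $\tau$, so comparing them by a single morphism is meaningful.

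Items (i) and (iii) are essentially immediate. The functor $\ell_A$ tensors every object with the trivial line bundle carrying connection $A$; because $A$ is scalar-valued it is central in every matrix identity~\eqref{Condition morphisms} and commutes strictly with the monoidal product $\otimes$ on $\HVBdl_\triv(M)$, which acts on connection 1-forms as $\xi\otimes\eta=\xi\otimes\One+\One\otimes\eta$. The normalisation $m_0=\id$ follows because $\triangle^1(x;0)$ degenerates to a point and the integral of $A$ over it vanishes.

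For (ii) the argument is a direct analogue of the calculation that established Proposition~\ref{Prop: Trivialisation of the cocycle}. Each $m_{v|\eta}$ is a $\sfU(1)$-valued scalar multiple of $\One_n$ and is independent of $\eta$; naturality in $\eta$ is therefore automatic, and~\eqref{Condition morphisms} collapses to a single identity of $\frh(n)$-valued 1-forms on $M$ between the source $\ell_A(\Pa_v(\eta))$, the target $\Pa^0_v(\ell_A(\eta))=\tau_{-v}^*(\eta+A\One_n)$, and $m_{v|\eta}^{-1}\dd m_{v|\eta}$. I would compute the logarithmic derivative of $m_{v|\eta}$ by the Lie-derivative/integration interchange, then expand $\mathcal{L}_{\hat a}A=\iota_{\hat a}\rho+\dd\iota_{\hat a}A$ via Cartan's formula and use the fundamental theorem of calculus on the exact piece along $\triangle^1(x;v)$. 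The $\dd\iota_{\hat a}A$ contribution recovers the difference $\tau_{-v}^*A-A$ and the $\iota_{\hat a}\rho$ contribution cancels the $\rho$-term that distinguishes $\Pa_v$ from $\Pa^0_v=\tau_{-v}^*$, so both sides of~\eqref{Condition morphisms} agree.

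The main obstacle is (iv). With $\Pi^0_{v,w}=\id$ and every 2-morphism in sight a central $\sfU(1)$-valued scalar times $\One_n$, the coherence square reduces at each point $x\in M$ to a single pointwise identity among the four exponentials $m_{v|\Pa_w(\eta)}(x)$, $m_{w|\eta}(x-v)$, $m_{v+w|\eta}(x)$ and $\Pi_{v,w|\eta}(x)$. After taking logarithms, commutativity is equivalent to
\begin{equation}
	\int_{\triangle^2(x;w,v)}\rho\ =\ \int_{\triangle^1(x;v)}A-\int_{\triangle^1(x;v+w)}A+\int_{\triangle^1(x-v;w)}A \ ,
\end{equation}
which is Stokes' Theorem for $A$ on the 2-simplex $\triangle^2(x;w,v)$: the right-hand side is $\int_{\partial\triangle^2(x;w,v)}A$ with the oriented boundary decomposition that is the $m=2$ analogue of~\eqref{Eq: Boundary delta3}, and the relation $\rho=\dd A$ identifies this with the left-hand side. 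Hence the entire argument is absorbed into the same Stokes--Cartan toolkit used throughout Section~\ref{sect:weak projective rep of translations}, the only new input being the global primitive $A$, which exists precisely because $H=0$.
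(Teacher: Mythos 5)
Your proposal is correct and follows essentially the same route as the paper: the parallelism of $m_v$ is checked via the Lie-derivative/integration interchange, Cartan's formula and the fundamental theorem of calculus on $\triangle^1(x;v)$, and the coherence square reduces to the identity $m_w\circ m_v = m_{v+w}\circ\Pi_{v,w}$, which is exactly the Stokes computation of~\eqref{Eq: quasi-projective representation}. The only difference is that you spell out the normalisation and module-functor checks and write the resulting Stokes identity explicitly, whereas the paper simply cites~\eqref{Eq: quasi-projective representation}.
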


\begin{proof}
Let $a \in T_x M$ be a tangent vector at $x \in M$.
We start by checking that $m_v$ is parallel:
\begin{align}
	\dd \bigg( \frac\iu\hbar \, \int_{\triangle^1(-;v)}\, A \bigg)|_x (a)
	&= \frac\iu\hbar\, \mathcal{L}_{\hat{a}} \bigg(\int_{\triangle^1(-;v)}\, A\bigg)|_x
	\\[4pt]
	&= \frac\iu\hbar\, \int_{\triangle^1(x;v)}\, \mathcal{L}_{\hat{a}} A
	\\[4pt]
	&= \frac\iu\hbar\, \bigg( \int_{\triangle^1(x;v)}\, \iota_{\hat{a}} \rho + A|_x(a)-A|_{x-v}(a) \bigg)\ .
\end{align}
To show the commutativity of the diagrams in Definition \ref{Morphisms of weak projective representations} in this case we have to check the identity $m_w \circ m_v= m_{v+w} \circ \Pi_{v,w}$ for all $v,w \in \R^d_\rmt$.
Inserting the definitions we see that this equation reduces to~\eqref{Eq: quasi-projective representation}.
\end{proof}

\subsubsection*{Constant magnetic charge}

Let us now consider a non-vanishing but constant magnetic charge
$H=\tilde H$.  Then the 3-cocycle $\omega$ factors through $\sfU(1)$ regarded as a trivial $\R^d_\rmt$-module, i.e. there exists a 3-cocycle $\tilde{\omega} \colon \R^d_\rmt \times \R^d_\rmt \times \R^d_\rmt\longrightarrow \sfU(1)$ that makes the diagram
\begin{equation}
\begin{tikzcd}
	\R^d_\rmt \times \R^d_\rmt \times \R^d_\rmt \ar[r, "\omega"] \ar[d, dashed, "\tilde{\omega}",swap]  & C^\infty(M, \sfU(1)) \\
	  \sfU(1) \ar[ru,hookrightarrow, "\imath"'] 
\end{tikzcd} 
\end{equation} 
commute. In this case the higher weak 2-cocycle $(\tilde\chi, \tilde\omega)$ is given by
\begin{equation}\label{eq:tildeomega}
	\tilde\chi_{v,w}|_x(a) =  \frac1{2\hbar} \, \tilde H(a,v, w) \qquad
        \mbox{and} \qquad \tilde\omega_{u,v,w}(x) = \exp \bigg(
        \frac\iu{6\hbar}\, \tilde H(u ,v, w) \bigg)
\end{equation}
for any translation vectors $u,v,w \in \R^d_\rmt$, points $x \in M$, and tangent vectors $a \in T_xM$.
The weak projective 2-representation for the bundle gerbe $\Ia_{\tilde\rho}$
with the magnetic field
\begin{equation}\label{eq:rhoHconstant}
	\tilde\rho_{ij}(x) = \frac13\, \sum_{k=1}^d\, \tilde H_{ijk}\, x^k
\end{equation}
takes the form
\begin{align}
	\tilde\Pa_v \colon \Gamma(M,\mathcal{I}_\rho)& \longrightarrow \Gamma(M,\mathcal{I}_\rho)\ ,
	\\
	\eta|_{x} & \longmapsto (\tau_{-v}^*\eta)|_{x} +
                    \frac{1}{3\hbar}\, \tilde H(v ,-, x) \ ,
	\\
	f & \longmapsto \tau_{-v}^*f  
\end{align}
with coherence isomorphisms
\begin{align}\label{eq:tildePi}
	\tilde\Pi_{v,w}(x) = \exp \bigg(- \frac\iu{6\hbar} \, \tilde H(x ,v, w)
  \bigg) \ ,
\end{align}
for all $\eta\in \Omega^1(M)$.
Thus the weak projective 2-representation induced by the parallel
transport $\Pa$ on the bundle gerbe $\Ia_{\tilde\rho}$ is non-trivial even for
the simple magnetic field \eqref{eq:rhoHconstant}.
We emphasise that despite its very simple form this magnetic
field cannot be treated using line bundles -- in order to capture the
non-trivial 3-form field strength $\tilde H$ one has to employ the
formalism of bundle gerbes if one wishes to describe quantum states of
the charged particle geometrically.

\subsection{Deformation quantisation}
\label{sec:defquant}

Despite the complexity and somewhat abstract setting of the geometric formalism above, the bivector field
\eqref{eq:varthetarho} and corresponding twisted Poisson brackets
\eqref{eq:twistedbrackets} can still be treated concretely through (formal)
deformation quantisation on the space of smooth functions
$C^\infty(\frM,\C)$ on phase space $\frM$ for arbitrary smooth
distributions of magnetic charge
$H=\dd\rho\in\Omega^3(M)$. Generically, in this case the noncommutative and
nonassociative star product $\star_H$ is a product on the algebra of
formal power series $C^\infty(\frM,\C)[[\hbar]]$ defined for  two
smooth functions $f,g$ on $\frM$ by
\begin{equation}
f\star_H g = f\, g + \frac{\iu\,\hbar}2\, \{f,g\}_\rho +
\sum_{n\geq2}\, \frac{(\iu\,\hbar)^n}{n!} \ {\rm b}_n(f,g) \ ,
\end{equation}
where the coefficients ${\rm b}_n$ are bidifferential operators. This
was first constructed by~\cite{MSS:NonGeo_Fluxes_and_Hopf_twist_Def}
using the Kontsevich formalism, which provides an explicit
construction of the bidifferential operators ${\rm b}_n$ in terms of
integrals on configuration spaces of the hyperbolic
plane.\footnote{See~\cite{Soloviev:Dirac_Monopole_and_Kontsevich} for a
  treatment of the Dirac monopole field in this setting.} The
Kontsevich formality construction
also quantises the trivector field
\eqref{eq:Schoutenbracket} and corresponding Jacobiators
\eqref{eq:Jacobiators} to the $3$-bracket measuring nonassociativity of three smooth functions
$f,g,h$ given by
\begin{equation}
[f,g,h]_{\star_H} = -\hbar^2\, \{f,g,h\}_\rho + \sum_{n\geq3}\,
\frac{(\iu\,\hbar)^n}{n!} \ {\rm
  t}_n(f,g,h) \ ,
\end{equation}
where ${\rm t}_n$ are tridifferential operators.

These formal power series expansions simplify drastically in the case
of a constant magnetic charge distribution $H=\tilde H$, making the
derivation of explicit expressions
possible~\cite{MSS:NonGeo_Fluxes_and_Hopf_twist_Def}. In particular,
in~\cite{Mylonas:2013jha} it is shown that there is a strict
deformation quantisation formula for the nonassociative star product which is
formally identical to the twisted convolution integral
\eqref{eq:MoyalWeylconstant} for the Moyal-Weyl star product:
\begin{equation}
f\star_{\tilde H} g = \frac1{(\pi\,\hbar)^{2d}}\, \int_\frM \ \int_\frM\
\e^{-\frac{2\,\iu}\hbar\, \sigma_{\tilde\rho}(Y,Z)}\, f(X-Y)\, g(X-Z)\ \dd Y\
\dd Z \ ,
\end{equation}
where here $\sigma_{\tilde\rho}$ is the almost symplectic form \eqref{eq:sigmarho}
corresponding to the magnetic field \eqref{eq:rhoHconstant}. In this
case, the functions
\begin{equation}
\Wscr(\Pa_v):=\exp\Big(\frac\iu\hbar \, \langle
  p,v\rangle \Big)
\end{equation}
explicitly realise the algebraic relations of the weak
projective 2-representation of nonassociative magnetic translation
operators through~\cite{Mylonas:2013jha,Szabo:Magnetic_monopoles_and_NAG}
\begin{equation}
\Wscr(\Pa_v)\star_{\tilde H}\Wscr(\Pa_w) = \tilde\Pi_{v,w}(x) \
\Wscr(\Pa_{v+w}) 
\end{equation}
and
\begin{equation}
\big(\Wscr(\Pa_u)\star_{\tilde H}\Wscr(\Pa_v)\big) \star_{\tilde H} \Wscr(\Pa_w) =
\tilde\omega_{u,v,w}(x) \ \Wscr(\Pa_u)\star_{\tilde H} \big(
\Wscr(\Pa_v)\star_{\tilde H}\Wscr(\Pa_w) \big) \ ,
\end{equation}
where $\tilde\omega$ and $\tilde\Pi$ are given in
\eqref{eq:tildeomega} and \eqref{eq:tildePi}, respectively. While this
evidently seems to suggest a higher version of the Weyl correspondence
discussed in Section~\ref{Sec: Case without sources}, it is not clear
to us at this stage how to make this precise: what is missing is a
suitable higher version of a magnetic Weyl system
\eqref{eq:Weylsystem} that would lead to a
quantisation map $f\mapsto \mathcal{O}_f$ taking phase space functions to
suitable functors defined by parallel transport on sections of the
bundle gerbe $\Ia_{\tilde\rho}$. A categorification of the magnetic Weyl correspondence is also discussed
in~\cite[Section~4]{MSS:NonGeo_Fluxes_and_Hopf_twist_Def} by
integrating the $L_\infty$-algebra of
the twisted magnetic Poisson structure to a suitable Lie 2-group into which 
$C^\infty(\frM,\C)$ embeds as an algebra object.

\section{Covariant differentiation in bundle gerbes}
\label{sect:Covariant derivatives of sections}

Let $\Ia_\rho$ be a topologically trivial bundle gerbe on $M=\R^d$ corresponding to a magnetic field $\rho\in\Omega^2(M)$.
In Section~\ref{sect:weak projective rep of translations} we have seen how the translation group $\R^d_\rmt$ acts on the category $\Gamma(M, \Ia_\rho)$ of smooth global sections of $\Ia_\rho$, and thereby on the 2-Hilbert space $(\Gamma(M,\Ia_\rho), \<-,-\>)$. With an eye to understanding better what a higher version of the magnetic Weyl correspondence discussed in Section~\ref{sec:defquant} might involve, we can examine infinitesimal translations, or derivatives, which correspond to momentum operators in the applications to quantum mechanics. 
In this section we analyse what it means for a section of a bundle gerbe $\Ia_\rho$ on $M$ to be covariantly constant and carry out first steps towards understanding momentum operators in this higher geometric context.

\subsection{Homotopy fixed points}

Throughout this section we consider a general hermitean vector bundle $(E, \nabla^E)$ with connection on $M=\R^d$.
As pointed out in Remarks~\ref{rmk:BGrb_triv makes sense on every M} and~\ref{eq:sections are equiv to Hilb}, we may view $(E,\nabla^E)$ as a section of $\Ia_\rho$.
We would like to find a notion of when a section $(E,\nabla^E) \in \Gamma(M, \Ia_\rho)$ is parallel, in order to then understand the covariant derivative of a general section as an obstruction to it being parallel.

We start again by considering sections of line bundles.
Let $(L,\nabla^L)$ be a hermitean line bundle with connection on $M=\R^d$.
The translation group $\R^d_\rmt$ acts on the space of sections $\Gamma(M, L)$ by
\begin{equation}
	(P_v \psi)|_{x} \coloneqq P^{\nabla^L}_{\triangle^1(x;v)} (\psi|_{x-v})\ ,
\end{equation}
where $P^{\nabla^L}$ is the parallel transport on $L$ that corresponds to $\nabla^L$ and $\psi \in \Gamma(M,L)$ is a smooth section of $L$.
Let $v \in \R^d_\rmt$ be an arbitrary translation vector with associated global vector field $\hat{v} \in \Gamma(M, TM)$, and let $\<v\> \subset \R^d_\rmt$ denote the subgroup generated by $v$.
That is, $\<v\>=\{s\,v\,|\,s\in\R\}$ is the group of translations on $M$ in the direction of $v$.
A section $\psi \in \Gamma(M, L)$ is covariantly constant in the
direction of $v$, i.e. $\nabla^L_{\hat{v}} \psi = 0$, if and only if
$\psi$ is invariant under, or a \emph{fixed point} for, the
restriction of the action of the magnetic translations $P_{(-)}$ to the subgroup $\<v\> \subset \R_\rmt^d$.

In the categorified setting there is an appropriately weakened notion of invariance under a group action.

\begin{definition}
\label{def:hofpt}
Let $(\Theta, \Pi)$ be an action of a group $\sfG$ on a category $\Ca$ as in Definition \ref{Def: Group action on categories}.
A \emph{homotopy fixed point for $\Theta$} is an object $C \in \Ca$ together with a collection of isomorphisms $\epsilon_g \colon \Theta_g (C) \to C$ for all $g \in \sfG$ satisfying the coherence condition
\begin{equation}
\begin{tikzcd}[row sep=1.25cm, column sep = 1.75cm]
	\Theta_h \circ \Theta_g (C) \ar[r, "\Theta_h(\epsilon_g)"] \ar[d, "\Pi_{h,g}"'] & \Theta_h( C) \ar[d, "\epsilon_h"]
	\\
	\Theta_{h\,g}( C) \ar[r, "\epsilon_{h\,g}"'] & C
\end{tikzcd}
\end{equation}
for all $g,h\in\sfG$.
\end{definition}

We therefore investigate when a section $(E,\nabla^E) \in \Gamma(M, \Ia_\rho)$  can be endowed with a homotopy fixed point structure for the action of $\<v\> \subset \R^d_\rmt$ on $\Gamma(M, \Ia_\rho)$ that is given by the parallel transport $\Pa$, defined in~\eqref{eq:definition of PT of I_rho}, of the bundle gerbe $\Ia_\rho$.
Recalling that $\tau_v \colon M \to M$, $x \mapsto x+v$ denotes the translation by $v$, we set
\begin{equation}
	\Pa_v (E, \nabla^E) \coloneqq \tau_{-v}^* (E, \nabla^E)
        \otimes \bigg( M \times \C \ , \ \dd - \frac\iu\hbar\, \int_{\triangle^1(-;v)}\, \rho \bigg)\ ,
\end{equation}
which is the generalisation of~\eqref{eq:definition of PT of I_rho} to possibly non-trivial hermitean vector bundles.
Defining $\Pa_v \psi \coloneqq \tau^*_{-v} \psi$ for a morphism $\psi \colon E \to F$ of vector bundles we turn $\Pa_v$ into a functor $\HVBdl(M) \to \HVBdl(M)$.
The functor $\Pa_v$ preserves unitary and parallel morphisms.
The parallel transport on $E$ induces isomorphisms of vector bundles
\begin{equation}
\label{eq:hofpt morphism}
	P^{\nabla^E}_v \colon \Pa_v (E, \nabla^E) \to (E, \nabla^E)\ ,
	\quad
	E|_{x - v} \ni e \mapsto P^{\nabla^E}_{\triangle^1(x;v)} (e) \in E|_{x}\ .
\end{equation}
This structure is even coherent when carrying out several translations in the same direction:
let $s, t \in \R$ and recall the notation $E_\eta \coloneqq (M \times \C^n, \dd + \iu\, \eta) \in \HVBdl_\triv(M)$ from Definition~\ref{def:BGrb_triv}, where $\eta \in \Omega^1(M, \frh(n))$.
Consider the diagram that arises from the definition of $\Pi$
in~\eqref{eq:def Pi_{v,w}} given by
\begin{equation}
\label{eq:hofpt diagram for sv tv}
\small
\begin{tikzcd}[row sep=1.25cm, column sep = 2.0cm]
	E_{\chi_{s\,v,t\,v}} \otimes \Pa_{s\,v}\circ \Pa_{t\,v} (E, \nabla^E) \ar[r, "\Pa_{s\,v} (P^{\nabla^E}_{t\,v})"] \ar[d, "\Pi_{s\,v,t\,v}"'] & \Pa_{s\,v} (E, \nabla^E) \ar[d, "P^{\nabla^E}_{s\,v}"]
	\\
	\Pa_{(s+t)\,v} (E, \nabla^E) \ar[r, "P^{\nabla^E}_{(s+t)\,v}"'] & E
\end{tikzcd}
\quad = \quad
\begin{tikzcd}[row sep=1.25cm, column sep = 2.0cm]
	\Pa_{s\,v}\circ \Pa_{t\,v} (E, \nabla^E) \ar[r, "\Pa_{s\,v} (P^{\nabla^E}_{t\,v})"] \ar[d, "\id"'] & \Pa_{s\,v} (E, \nabla^E) \ar[d, "P^{\nabla^E}_{s\,v}"]
	\\
	\Pa_{(s+t)\,v} (E, \nabla^E) \ar[r, "P^{\nabla^E}_{(s+t)\,v}"'] & E
\end{tikzcd}
\normalsize
\end{equation}
Here we have used the fact that, because the two translation vectors
$s\,v$ and $t\,v$ are parallel, the 2-simplices
$\triangle^2(-;s\,v,t\,v)$ are all degenerate, thus making $\chi_{s\,v,t\,v}$ as well as $\Pi_{s\,v,t\,v}$ trivial.
The diagram on the right-hand side then commutes due to the fact that parallel transports in vector bundles are compatible with concatenation of paths.
This result can be summarised as 

\begin{proposition}
\label{st:directional hofpt structure on sections}
For any $v \in \R^d_\rmt$, the morphisms $P^{\nabla^E}_{s\,v}$ for
$s\in\R$ provide a homotopy fixed point structure on $(E,\nabla^E)$
for the action of the subgroup $\<v\> \subset \R^d_\rmt$ on the
category of sections $\Gamma(M, \Ia_\rho)$ of the bundle gerbe $\Ia_\rho$.
\end{proposition}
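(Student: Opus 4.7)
The plan is to verify the two pieces of data in Definition~\ref{def:hofpt} directly for the candidate structure $\epsilon_{s\,v} := P^{\nabla^E}_{s\,v}$, for all $s \in \R$. First I would check that $P^{\nabla^E}_{s\,v}$, as specified in \eqref{eq:hofpt morphism}, is indeed a morphism in $\Gamma(M, \Ia_\rho)$ from the twisted source $\Pa_{s\,v}(E, \nabla^E)$ to $(E, \nabla^E)$, i.e.\ that it is parallel and unitary. Fibrewise invertibility and unitarity are immediate from $\nabla^E$ being a hermitean connection. Parallelity reduces to differentiating the map $x \longmapsto P^{\nabla^E}_{\triangle^1(x;s\,v)}$ along a tangent vector $a \in T_xM$: the derivative produces exactly the holonomy contribution $-\frac\iu\hbar\, \int_{\triangle^1(x;s\,v)}\, \iota_{\hat a}\rho$ that is built into the twisted connection of $\Pa_{s\,v}(E, \nabla^E)$, so the two contributions cancel.

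Next, and this is the crux of the argument, I would observe that the restriction of the weak projective 2-representation $(\Pa, \Pi)$ to the one-parameter subgroup $\<v\> \subset \R^d_\rmt$ is in fact an honest action in the sense of Definition~\ref{Def: Group action on categories}. The reason is that for any two parallel translation vectors $s\,v$ and $t\,v$, the 2-simplex $\triangle^2(x; s\,v, t\,v)$ collapses onto the line segment through $x$ in the direction $v$ and is therefore degenerate, so every integral of a 2-form over it vanishes identically. Consequently the 1-form $\chi_{s\,v, t\,v}$ and the phase $\Pi_{s\,v, t\,v}$ defined in \eqref{eq:def Pi_{v,w}} are trivial.

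With $\Pi_{s\,v, t\,v} = \id$, the homotopy fixed point coherence of Definition~\ref{def:hofpt} reduces to the single equation
\begin{equation}
	P^{\nabla^E}_{(s+t)\,v} = P^{\nabla^E}_{s\,v} \circ \Pa_{s\,v}\bigl( P^{\nabla^E}_{t\,v} \bigr)\ ,
\end{equation}
which is precisely the right-hand commuting square in \eqref{eq:hofpt diagram for sv tv}. This identity is the classical statement that parallel transport along two concatenated straight-line segments both pointing in the direction $v$ coincides with parallel transport along the combined segment, a consequence of the uniqueness of solutions of the parallel transport ODE; no holonomy can appear because the concatenated path is straight.

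The main obstacle, modest as it is, lies in the first step: one has to track how the twist by $-\frac\iu\hbar\, \int_{\triangle^1(-;s\,v)}\, \rho$ appearing in the definition of $\Pa_{s\,v}(E, \nabla^E)$ interacts with the naive parallel transport $P^{\nabla^E}_{\triangle^1(x; s\,v)}$, and to recognise that these two ingredients together are exactly what promotes $P^{\nabla^E}_{s\,v}$ from a fibrewise linear isomorphism to a genuine morphism in $\Gamma(M, \Ia_\rho)$. Once that is secured, the degeneracy of $\triangle^2(x;s\,v,t\,v)$ together with the concatenation law for parallel transport finishes the argument.
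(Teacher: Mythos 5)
Your second and third steps coincide with the paper's own argument: the degeneracy of the 2-simplices $\triangle^2(x;s\,v,t\,v)$ for collinear translation vectors makes $\chi_{s\,v,t\,v}$ and $\Pi_{s\,v,t\,v}$ trivial, and the surviving coherence square is exactly the concatenation law for parallel transport along consecutive straight segments in the same direction. That part is correct and is precisely how the paper proves the proposition.

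The problem is your first step, which you present as "the main obstacle". You claim that $P^{\nabla^E}_{s\,v}$ is a \emph{parallel} morphism because differentiating $x \mapsto P^{\nabla^E}_{\triangle^1(x;s\,v)}$ "produces exactly the holonomy contribution $-\frac\iu\hbar\,\int_{\triangle^1(x;s\,v)}\,\iota_{\hat a}\rho$ that is built into the twisted connection, so the two contributions cancel." This is false in general. By the Cartan formula, the derivative of the parallel transport produces a term $\int_{\triangle^1(x;s\,v)}\,\iota_{\hat a}F_{\nabla^E}$ involving the \emph{curvature of $E$} (plus boundary terms matching the pullback connection $\tau_{-s\,v}^*\nabla^E$), whereas the twist built into $\Pa_{s\,v}(E,\nabla^E)$ contributes $\frac1\hbar\int_{\triangle^1(x;s\,v)}\,\iota_{\hat a}\rho$. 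These cancel if and only if $\iota_{\hat v}\big(F_{\nabla^E}-\frac1\hbar\,\rho\cdot\One\big)=0$, i.e.\ precisely when the fake curvature vanishes in the direction $v$; this is the content of Theorem~\ref{st:cov der and par hofpt structures}, and the entire point of Section~\ref{sect:Covariant derivatives of sections} is that $P^{\nabla^E}_v$ is generically \emph{not} parallel. Your proof is saved only because parallelity is not actually required: morphisms in $\Gamma(M,\Ia_\rho)$ are arbitrary smooth (matrix-valued) bundle morphisms, with parallelity an additional property (Definition~\ref{def:BGrb_triv}), and Definition~\ref{def:hofpt} asks only for isomorphisms. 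So step 1 should be replaced by the immediate observation that $P^{\nabla^E}_{s\,v}$ is a fibrewise unitary isomorphism and hence an isomorphism in the category of sections; the cancellation you invoke does not occur, and asserting it would contradict the obstruction theory developed immediately after this proposition.
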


\begin{remark}
\label{rmk:par sections and hol}
The diagram~\eqref{eq:hofpt diagram for sv tv} does not commute for
arbitrary translations $v,w \in \R^d_\rmt$ if the parallel transport
$\Pa$ of $\Ia_\rho$ has non-trivial holonomy line bundle $E_{\chi_{v,w}}$, in analogy to the obstruction on the existence of parallel sections posed by the holonomy of a connection on a line bundle.
This will be made precise in Theorem~\ref{st:existence of parallel
  sections} below.
\end{remark}

\subsection{Parallel homotopy fixed points and fake curvature}

We would like to understand a homotopy fixed point structure for $\<v\>$ on a section $(E,\nabla^E) \in \Gamma(M, \Ia_\rho)$ as a notion of $(E,\nabla^E)$ being covariantly constant in the direction defined by $v$.
However, Proposition~\ref{st:directional hofpt structure on sections} states that there exists a homotopy fixed point structure on $(E,\nabla^E)$ for any translation vector $v \in \R^d_\rmt$, so that every section of $\Gamma(M,\Ia_\rho)$ would be parallel.
In Remark~\ref{rmk:par sections and hol}, in contrast, we observed
that the holonomy line bundle $E_{\chi_{v,w}}$ poses an obstruction to
the existence of global homotopy fixed points.
Therefore, the homotopy fixed point structures from Proposition~\ref{st:directional hofpt structure on sections} cannot be the correct notion of covariant constancy for sections of $\Ia_\rho$ yet.

The resolution of this contradiction is that, while the morphism
$P_v^{\nabla^E}$ defined in~\eqref{eq:hofpt morphism} is always a unitary isomorphism, it is not necessarily parallel.
Following notions of gauge theory, we regard two sections in the 2-Hilbert space $\Gamma(M, \Ia_\rho)$ given by hermitean vector bundles $(E,\nabla^E)$ and $(F, \nabla^F)$ with connection as equivalent if they differ only by a gauge transformation, i.e. by a unitary \emph{parallel} isomorphism $\psi \colon E \to F$.
The obstruction to $P^{\nabla^E}_v$ being parallel can be computed as follows:
let $w \in \R^d_\rmt$ be an arbitrary translation vector.
Then
\begin{equation}
\begin{aligned}
\big( P^{\nabla^E}_v \big)^{-1} \circ \big( P^{\nabla^E}_w \big)^{-1} \circ P^{\nabla^E}_v \circ P^{\Pa_v(E,\nabla^E)}_w\big|_x
	= \hol \big( (E, \nabla^E), \partial\, \square^2(x;v,w) \big)
        \cdot \exp \bigg( -\frac\iu\hbar\, \int_{\square^2(x;v,w)} \, \rho \bigg)\ ,
\end{aligned}
\end{equation}
where $\square^2(x; v, w) \subset \R^d$ is the parallelogram in $\R^d$ with corners $x - (v+w), x-v, x-w$ and $x$.
Thus, introducing parameters by replacing $v $ with $t\, v$ and $w $
with $s\, w$ for $t, s \in (-1,1)$, and taking the limit $s,t \to 0$,
we can derive the obstruction to $P^{\nabla^E}_v$ being parallel.

\begin{definition}
Let $\Ia_\rho \in \BGrb^\nabla_\triv(M)$ and $(E,\nabla^E) \in \HVBdl(M)$.
The \emph{(higher) covariant derivative of $(E,\nabla^E)$ in the direction $v \in \R^d_\rmt$} is the $\End(E)$-valued 1-form
\begin{equation}
	\mbf\nabla^\rho_{\hat{v}} (E, \nabla^E) \coloneqq \iota_{\hat{v}} \Big( F_{\nabla^E} - \frac1\hbar\, \rho \cdot \One \Big) \ ,
\end{equation}
where $F_{\nabla^E}$ is the curvature of $\nabla^E$.
The $\End(E)$-valued 2-form $F_{\nabla^E} - \frac1\hbar\, \rho \cdot \One$ is called the \emph{fake curvature of $(E,\nabla^E)$} when $(E, \nabla^E)$ is regarded as a section of $\Ia_\rho$.
If the section $(E, \nabla^E)$ is parallel, i.e. if it satisfies $F_{\nabla^E} - \frac1\hbar\, \rho \cdot \One = 0$, we equivalently say that it satisfies the \emph{fake curvature condition}.
\end{definition}

We call the homotopy fixed point structure $P^{\nabla^E}$ on $(E,\nabla^E)$ for the action of $\<v\>\subset\R^d_\rmt$ a \emph{parallel homotopy fixed point structure} if $P^{\nabla^E}_{s\,v}$ is a parallel morphism of vector bundles for all $s \in \R$.
We have thus proved

\begin{theorem}
\label{st:cov der and par hofpt structures}
Let $v \in \R^d_\rmt$ be a translation vector and let $(E, \nabla^E) \in \Gamma(M, \Ia_\rho)$ be a section of $\Ia_\rho$.
Then \smash{$P^{\nabla^E}$} is a parallel homotopy fixed point structure on $(E,\nabla^E)$, for the action of the group $\<v\>$ of translations in the direction of $v$ via the parallel transport $\Pa$ of $\Ia_\rho$, if and only if $\mbf\nabla^\rho_{\hat{v}}(E,\nabla^E) = 0$, i.e.~precisely if $(E,\nabla^E)$ is covariantly constant in the direction $v$.
\end{theorem}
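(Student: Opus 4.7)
The plan is to derive both implications from the exact obstruction formula displayed immediately before the theorem statement. Specialised to $v \mapsto s\,v$ and $w \mapsto t\,w$, that formula writes the composition measuring the failure of $P^{\nabla^E}_{s\,v}$ to intertwine the parallel transports of $\Pa_{s\,v}(E,\nabla^E)$ and $(E, \nabla^E)$ along the path $\triangle^1(-;t\,w)$ as the product of the holonomy of $\nabla^E$ around the parallelogram $\square^2(x;s\,v,t\,w)$ with a $\rho$-dependent $\sfU(1)$-factor. A morphism of hermitean vector bundles with connection is parallel if and only if it intertwines parallel transports along every smooth path, and since the translation segments $\triangle^1(-;t\,w)$ span every direction at every point, the family $\{P^{\nabla^E}_{s\,v}\}_{s \in \R}$ is a parallel homotopy fixed point structure if and only if the obstruction equals $\One$ identically in $s,t,w,x$.

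For the direction $(\Rightarrow)$, I would expand both factors of the obstruction to leading non-trivial order in the product $s\,t$. Since the parallelogram has area of order $s\,t$, the holonomy factor reads $\One - \iu\,s\,t\,F_{\nabla^E}|_x(v,w) + O\big((st)^2\big)$ while the $\rho$-factor reads $\One - \frac{\iu\,s\,t}\hbar\,\rho|_x(v,w)\cdot\One + O\big((st)^2\big)$. Setting the product equal to $\One$ forces the $s\,t$-coefficient to vanish, which, with the sign fixed by the orientation conventions implicit in the obstruction formula, yields the pointwise identity $F_{\nabla^E}|_x(v,w) = \tfrac{1}{\hbar}\,\rho|_x(v,w)\cdot\One$. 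Letting $w \in \R^d_\rmt$ and $x \in M$ vary, we obtain $\iota_{\hat v}\big(F_{\nabla^E} - \tfrac{1}{\hbar}\rho\cdot\One\big) = 0$, that is, $\mbf\nabla^\rho_{\hat v}(E,\nabla^E) = 0$.

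For $(\Leftarrow)$, assuming $\iota_{\hat v}\big(F_{\nabla^E} - \tfrac{1}{\hbar}\rho\cdot\One\big) = 0$, I would first conclude that $F_{\nabla^E}(v, \cdot) = \tfrac{1}{\hbar}\rho(v, \cdot)\cdot\One$ pointwise on $M$. By bilinearity and antisymmetry of $F_{\nabla^E}$, its restriction to any 2-plane containing $v$ is then scalar-valued, hence central in $\End(E)$. Since $\square^2(x;s\,v,t\,w)$ lies in an affine 2-plane parallel to $\mathrm{span}(v,w)$, the non-abelian Stokes theorem collapses: the path-ordered exponential becomes an ordinary exponential, and the holonomy factor equals $\exp\big(-\iu\int_{\square^2}F_{\nabla^E}\big)\cdot\One$. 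Substituting the fake curvature identity converts this to a $\sfU(1)$-factor that is inverse to the $\rho$-factor in the obstruction, so the two cancel and the obstruction equals $\One$ for every $s,t,w,x$. Hence $P^{\nabla^E}_{s\,v}$ is a parallel morphism for every $s \in \R$.

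The main obstacle is the $(\Leftarrow)$ direction, where the pointwise centrality of $F_{\nabla^E}$ along the 2-planes parallel to $\mathrm{span}(v,w)$ must be promoted to the vanishing of \emph{all} Magnus/Baker–Campbell–Hausdorff corrections in the path-ordered holonomy around the finite parallelogram. The crucial input is that the fake curvature hypothesis is global on $M$, so centrality holds throughout the integration region and not merely at the basepoint $x$; this is precisely what allows the non-abelian Stokes theorem to collapse to its scalar form. In the direction $(\Rightarrow)$, by contrast, only the leading-order infinitesimal behaviour of the obstruction is needed, and no global argument beyond the pointwise infinitesimal statement is required.
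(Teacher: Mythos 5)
Your argument is correct and follows the paper's own route: the paper likewise reads off the covariant derivative $\mbf\nabla^\rho_{\hat v}(E,\nabla^E)$ as the infinitesimal ($s,t\to 0$) limit of the displayed obstruction $\hol\big((E,\nabla^E),\partial\,\square^2(x;v,w)\big)\cdot\exp\big({-\tfrac{\iu}{\hbar}\int_{\square^2(x;v,w)}\rho}\big)$, and the theorem is asserted on exactly that basis. Your only addition is to make the converse direction explicit, via the pointwise centrality of $\iota_{\hat v}F_{\nabla^E}$ under the global fake-flatness hypothesis and the resulting collapse of the non-abelian Stokes formula over the parallelogram; this step is sound and merely fills in what the paper leaves implicit.
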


This provides a novel approach to the fake curvature condition, which deepens the understanding of bundle gerbes with connections as higher line bundles with connections.
We can define a $T^*M \otimes \End(E)$-valued 1-form
\begin{equation}
	\bfdd_\rho (E, \nabla^E) \coloneqq F_{\nabla^E} - \frac1\hbar\, \rho \cdot \One = \sum_{i=1}^d\, \hat e{}^i \otimes \mbf\nabla^\rho_{\hat{e}_i} (E, \nabla^E)\ ,
\end{equation}
where as before $(e_i)_{i = 1, \ldots, d}$ is the standard basis of $\R^d$ and $\hat e^i$ is the dual 1-form of the vector field $\hat{e}_i$.
The expression for the covariant exterior differential $\bfdd_{\rho=0}(E, \nabla^E)$ of a higher function $(E, \nabla^E)$ now perfectly parallels the expression for the de~Rham differential of an ordinary function.

Moreover, we can now properly understand the curvature $H = \dd \rho$ of the bundle gerbe $\Ia_\rho$, and thus by~\eqref{eq:def Pi_{v,w}} its holonomy line bundle $E_{\chi_{v,w}}$, as an obstruction to the existence of parallel sections.

\begin{theorem}
\label{st:existence of parallel sections}
Let $\rho \in \Omega^2(M)$ be a magnetic field on $M=\R^d$.
The bundle gerbe $\Ia_\rho$ admits a parallel section if and only if it is flat, i.e. precisely if $H = \dd \rho = 0$.
\end{theorem}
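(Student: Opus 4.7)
The statement unpacks as follows: a section $\eta \in \Gamma(\R^d, \Ia_\rho)$ is a 1-form $\eta \in \Omega^1(\R^d, \frh(n))$ for some $n \in \N_0$, and by the definition of the covariant derivative $\mbf\nabla^\rho$ it is parallel precisely when its fake curvature $F_\eta - \frac{1}{\hbar}\, \rho \cdot \One_n$ vanishes. So what must be shown is that an $\eta$ satisfying $F_\eta = \frac{1}{\hbar}\, \rho \cdot \One_n$ exists for some $n$ if and only if $\dd\rho = 0$.

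My plan is a direct two-direction argument. For the implication $H = 0 \Rightarrow$ existence of a parallel section, I would exploit that $\R^d$ is contractible: the Poincar\'e lemma supplies a 1-form $A \in \Omega^1(\R^d)$ with $\dd A = \frac{1}{\hbar}\, \rho$. Taking $\eta = A \in \Omega^1(\R^d, \frh(1))$ as a rank-one section of $\Ia_\rho$, the quadratic term $\frac{\iu}{2}\, [\eta, \eta]$ in $F_\eta$ vanishes because $\frh(1) \cong \R$ is abelian, so $F_\eta = \dd A = \frac{1}{\hbar}\, \rho \cdot \One_1$ and $\eta$ is parallel.

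For the converse, I would invoke the Bianchi identity $\dd F_\eta + \iu\, [\eta, F_\eta] = 0$ for the curvature of the hermitean connection $\dd + \iu\, \eta$ on $E_\eta$. If $\eta$ is parallel, then $F_\eta = \frac{1}{\hbar}\, \rho \cdot \One_n$ is a scalar 2-form times the identity matrix, hence central in $\End(\C^n)$; therefore the graded commutator $[\eta, F_\eta]$ vanishes and Bianchi collapses to $\dd F_\eta = 0$. Unwinding, this reads $\frac{1}{\hbar}\, \dd\rho \cdot \One_n = 0$, which forces $H = \dd\rho = 0$.

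There is no serious obstacle: the argument is short and relies only on Poincar\'e's lemma and the Bianchi identity, both of which are standard. The single conceptual point worth emphasising is that the centrality of $F_\eta$ as a scalar multiple of the identity is precisely what reduces the nonabelian Bianchi identity to its abelian version, so that the 3-form curvature $H$ of $\Ia_\rho$ emerges as the sole obstruction to the existence of parallel sections, in exact analogy with how the 2-form curvature of a hermitean line bundle obstructs the existence of parallel local frames.
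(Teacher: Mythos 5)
Your proof is correct, and the forward direction is essentially identical to the paper's: both invoke contractibility of $\R^d$ (the Poincar\'e lemma) to produce a primitive $A$ of $\rho$ (up to where one places the factor of $\hbar$) and observe that the resulting rank-one section is fake flat because the commutator term in the curvature vanishes in the abelian case. The converse differs in a small but genuine way. The paper passes to the determinant line bundle: fake flatness gives $F_{\nabla^{\det(E)}} = \frac{\rk(E)}{\hbar}\,\rho$, and closedness of the curvature of a \emph{line bundle} connection then forces $\dd\rho = 0$. You instead apply the full nonabelian Bianchi identity to $E$ itself and use that $F_\eta = \frac{1}{\hbar}\,\rho\cdot\One_n$ is central, so that $[\eta, F_\eta] = 0$ and $\dd F_\eta = 0$. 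These are really two faces of the same computation --- taking the determinant is taking the trace, and the trace of the Bianchi identity is exactly the statement that $\dd\,\mathrm{tr}(F_{\nabla^E}) = 0$ --- but your version needs only the Bianchi identity as a black box, while the paper's version needs only the abelian fact about line bundle curvatures and avoids stating the graded commutator conventions. One pedantic caveat common to both arguments: the definition allows rank $n = 0$, for which the fake curvature condition is vacuous, so strictly speaking the converse requires $\rk(E) \geq 1$ (the paper's $\frac{\rk(E)}{\hbar}\,\dd\rho = 0$ step silently assumes this too); this is a degenerate edge case and not a real gap.
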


\begin{proof}
If $H = \dd \rho = 0$, then there exists a 1-form $A \in \Omega^1(M)$ such that $\dd A = \rho$.
Set
\begin{equation}
	(E, \nabla^E) = E_A = \Big(M \times \C \ , \ \dd + \frac\iu\hbar \, A\Big)\ .
\end{equation}
Then $\bfdd_\rho E_A = 0$, i.e. $E_A \in \Gamma(M, \Ia_\rho)$ is parallel.

Conversely, let $(E,\nabla^E)$ be a section of $\Ia_\rho$ with $\bfdd_\rho(E,\nabla^E) = 0$.
This is equivalent to $(E,\nabla^E)$ being fake flat, i.e. to $F_{\nabla^E} = \frac1\hbar\,\rho\cdot\One$.
If $\det(E)$ denotes the determinant line bundle of $E$ with connection $\nabla^{\det(E)}$ induced by $\nabla^E$, then $F_{\nabla^{\det(E)}} = \frac{\rk(E)}\hbar \, \rho$, where $\rk(E)$ is the rank of $E$.
We compute
\begin{equation}
	\frac{\rk(E)}\hbar \, \dd \rho = \dd F_{\nabla^{\det(E)}} = 0\ ,
\end{equation}
and the result follows.
\end{proof}

Recall that connections on a hermitean vector bundle $E\to M$ form an affine
space over the vector space $\Omega^1(M, \End_\frh(E))$, where
$\End_\frh(E)$ is the bundle of hermitean endomorphisms of $E$.

\begin{definition}
A \emph{tangent vector of $\HVBdl(M)$ at $(E,\nabla^E)$} is a triple $(E, \nabla^E, \nu)$, where $\nu \in \Omega^1(M, \End_\frh(E))$.
A \emph{morphism of tangent vectors} $(E, \nabla^E, \nu) \to (E', \nabla^{E'}, \nu')$ is a pair $(\psi, \psi^{(1)})$ of morphisms of vector bundles $\psi, \psi^{(1)} \colon E \to E'$.
A pair of morphisms $(\psi, \psi^{(1)})$ is \emph{parallel} if%
\begin{equation}
	\nabla^{\Hom(E,E')} \psi^{(1)} = \psi\, \nu - \nu'\, \psi\ .
\end{equation}
The direct sum and tensor product of two tangent vectors are given by
\begin{equation}
\begin{split}
	(E', \nabla^{E'}, \nu') \oplus (E, \nabla^E, \nu) &= (E' \oplus E, \nabla^{E'} \oplus \nabla^E, \nu' \oplus \nu)\ ,
	\\[4pt]
	(E', \nabla^{E'}, \eta) \otimes (E, \nabla^E, \nu) &= (E' \otimes
        E, \nabla^{E'} \otimes \One + \One \otimes \nabla^E, \nu' \otimes
        \One + \One \otimes \nu)\ .
\end{split}
\end{equation}
On morphisms, these operations read as
\begin{align}
	(\psi, \psi^{(1)}) \oplus (\phi, \phi^{(1)}) &= \big( \psi \oplus \psi,\, \psi^{(1)} \oplus \phi^{(1)} \big)\ ,
	\\[4pt]
	(\psi, \psi^{(1)}) \otimes (\phi, \phi^{(1)}) &= \big( \psi \otimes \One + \One \otimes \phi,\, \psi^{(1)} \otimes \One + \One \otimes \phi^{(1)} \big)\ .
\end{align}
This defines the \emph{rig category of tangent vectors} $T (\HVBdl(M))$ to $\HVBdl(M)$.
\end{definition}

The condition of a morphism $(\psi, \psi^{(1)})$ being parallel is equivalent to
\begin{equation}
	(\nabla^{E'} + t \, \nu') \circ (\psi + t \, \psi^{(1)}) -
        (\psi + t \, \psi^{(1)}) \circ (\nabla^E + t \, \nu) = O(t^2)\ ,
\end{equation}
for $t\in\R$.
If $\psi \colon (E, \nabla^E) \to (E', \nabla^{E'})$ is parallel, then so is
\begin{equation}
	\bfdd_\rho \psi \coloneqq (\psi, \psi^{(1)} = \psi)\ .
\end{equation}
This turns the covariant derivative $\mbf\nabla^\rho_{\hat{v}}$ into a functor
\begin{equation}
	\mbf\nabla^\rho_{\hat{v}} \colon \HVBdl(M) \to T \big( \HVBdl(M) \big)\ .
\end{equation}

We can tensor a tangent vector by a vector bundle using the zero
section to get
\begin{equation}
	(E', \nabla^{E'}) \otimes (E, \nabla^E, \nu) \coloneqq (E', \nabla^{E'}, 0) \otimes (E, \nabla^E, \nu)
	= (E' \otimes E, \nabla^{E' \otimes E}, \One \otimes \nu)\ ,
\end{equation}
where we have abbreviated the tensor product connection by $\nabla^{E' \otimes E}$.
Now if $(E_0, \nabla^{E_0}) \in \Gamma(M, \Ia_0)$ and $(E, \nabla^E) \in \Gamma(M, \Ia_\rho)$, the covariant derivative $\bfdd_\rho$ satisfies the following Leibniz rule:
on the one hand, we have
\begin{equation}
\begin{split}
	\bfdd_\rho \big( (E_0, \nabla^{E_0}) \otimes (E, \nabla^E) \big) &= \bfdd_\rho (E_0 \otimes E, \nabla^{E_0 \otimes E})
	\\[4pt]
	&=\bigg(E_0 \otimes E, \nabla^{E_0 \otimes E}, F_{\nabla^{E_0 \otimes E}} - \frac\iu\hbar\, \rho \cdot \One\bigg)
	\\[4pt]
	&= \bigg( E_0 \otimes E, \nabla^{E_0 \otimes E}, F_{\nabla^{E_0}} \otimes \One + \One \otimes \Big(F_{\nabla^E} - \frac\iu\hbar\, \rho \cdot \One\Big) \bigg)\ ,
\end{split}
\end{equation}
while at the same time
\begin{equation}
\begin{split}
	&\big( \bfdd_0 (E_0, \nabla^{E_0}) \big) \otimes (E, \nabla^E) + (E_0, \nabla^{E_0}) \otimes \bfdd_\rho (E, \nabla^E)
	\\
	& \qquad \qquad \qquad = (E_0, \nabla^{E_0}, F_{\nabla^{E_0}}) \otimes (E, \nabla^E, 0) + (E_0, \nabla^{E_0}, 0) \otimes \bigg(E, \nabla^E, F_{\nabla^E} - \frac\iu\hbar\, \rho \cdot \One\bigg)
	\\[4pt]
	& \qquad \qquad \qquad = (E_0 \otimes E, \nabla^{E_0 \otimes E}, F_{\nabla^{E_0}} \otimes \One) + \bigg( E_0 \otimes E, \nabla^{E_0 \otimes E}, \One \otimes \Big(F_{\nabla^E} - \frac\iu\hbar\, \rho \cdot \One\Big) \bigg)
	\\[4pt]
	& \qquad \qquad \qquad = \bigg( E_0\otimes E, \nabla^{E_0 \otimes E}, F_{\nabla^{E_0}} \otimes \One + \One \otimes \Big(F_{\nabla^E} - \frac\iu\hbar\, \rho \cdot \One\Big) \bigg)
	\\[4pt]
	& \qquad \qquad \qquad = \bfdd_\rho \big( (E_0, \nabla^{E_0}) \otimes (E, \nabla^E) \big)\ .
\end{split}
\end{equation}
Here the sum is taken in the tangent space to $\HVBdl(M)$ at $(E, \nabla^E)$ -- it is different from the direct sum.
The higher covariant derivative is also compatible with the direct sum of hermitean vector bundles with connections.
Thus, in the sense of higher scalars, sections and functions it
satisfies all properties that one would expect of a derivative.

Covariantly constant higher functions $(E_0, \nabla^{E_0}) \in \Gamma(M, \Ia_0)$ are exactly the flat hermitean vector bundles with connection.
Therefore, constant higher functions are very different in general from higher scalars $V \in \Hilb$, because the collection of flat hermitean vector bundles on a manifold $M$ depends on the fundamental group $\pi_1(M)$ and thus depends strongly on the topology of $M$.
This is, however, just the next higher analogue of how the collection
of locally constant functions (i.e. those with vanishing de~Rham differential) are topological -- it is isomorphic to $H^0(M,\C)$, and thus detects $\pi_0(M)$.
On the contractible base space $M=\R^d$ higher constant functions and higher scalars agree up to equivalence of categories.

Turning back to the original motivation of obtaining momentum operators on the 2-Hilbert space $(\Gamma(M, \Ia_\rho), \<-,-\>)$ of global sections, we see that the appearance of the tangent category obscures the idea of seeing the covariant derivative $\bfdd_\rho$ as an \emph{operator} -- its source and target categories do not agree, whence we cannot straightforwardly interpret $\bfdd_\rho$ as an observable. A solution to this problem might be to exploit the fact that $\HVBdl$ defines a stack on manifolds, hence naturally comes equipped with a smooth structure, and modifying the above construction to work in suitable parameterised families. This presumably leads to a much more complex structure than what we have described in the present paper, and we leave it for future investigations.

\subsection*{Acknowledgements}
We thank Marco Benini and Lennart Schmidt for helpful discussions.
This work was supported by the COST Action MP1405 ``Quantum Structure of Spacetime'', funded by the European Cooperation in Science and Technology (COST).
The work of S.B. was partially supported by the RTG~1670 ``Mathematics Inspired by String Theory and Quantum Field Theory''.
The work of L.M. was supported by the Doctoral Training Grant ST/N509099/1 from the UK Science and Technology Facilities Council (STFC).
The work of R.J.S. was supported in part by the STFC Consolidated Grant ST/P000363/1 ``Particle Theory at the Higgs Centre''.

\newcommand{\etalchar}[1]{$^{#1}$}

\end{document}